\newcommand{\rew}{\leadsto}
\newcommand{\rewstar}{\rew^{*}}
\newcommand{\rel}{gql-narrowing}
\newcommand{\hsp}{\null\hspace{1pc}}
\newcommand{\To}{\Rightarrow}
\newcommand{\lwedge}{\,\wedge\,} 
\newcommand{\syst}{\mathcal{R}_{{\it gql}}}
\newcommand{\gqa} {\mathcal{G}\!\mathcal{Q}}
\newcommand{\conselect}{CONSELECT}
\newcommand{\Gr} {\texttt{Gr}\xspace}
\newcommand{\Pat} {\texttt{Pat}\xspace}
\newcommand{\configurations} {\texttt{Configurations}\xspace}
\newcommand{\Exp} {\texttt{Exp}\xspace}
\newcommand{\Som}{\texttt{Som}\xspace}
\newcommand{\Var}{\texttt{Var}\xspace}
\newcommand{\eval}[3]{\mathit{ev}(#1,#2)_{#3}}
\newcommand{\ev}{\underline{\mathit{ev}}}
\newcommand{\uleval}[2]{\underline{\mathit{ev}}(#1,#2)}
\newcommand{\sparql}{\mathrm{SPARQL}\xspace}
\newcommand{\clause}[1]{\textrm{#1}} 
\newcommand{\patalgebra}{\mathcal{P}_{{\it gql}}}
\newcommand{\Ll}{{\mathcal{L}}} 
\newcommand{\mC}{{\mathcal{C}}} 
\newcommand{\V}{{\mathcal{V}}}
\newcommand{\ulm}{{\underline{m}}}  
\newcommand{\ulp}{{\underline{p}}}
\newcommand{\ulc}{{\underline{c}}}
\newcommand{\uli}{{\underline{i}\,}}
\newcommand{\emptysetbis}{\varnothing} 
\newcommand{\emptygraph}{\emptyset} 
\newcommand{\emptypattern}{\square}
\newcommand{\config}[3]{{[}#2,#1{]}} 
\newcommand{\configdeux}[2]{{[}#2,#1{]}} 
\newcommand{\conf}[2]{{[}\,#1,\,#2\,{]}}
\newcommand{\Ulm}{\mathit{Match}} 
\newcommand{\tbl}{\mathit{Tab}}
\newcommand{\op}{\mathit{op}} 
\newcommand{\Op}{\mathit{Op}} 
\newcommand{\agg}{\mathit{agg}} 
\newcommand{\Agg}{\mathit{Agg}} 
\newcommand{\gp}{\mathit{gp}} 
\newcommand{\true}{\mathit{true}} 
\newcommand{\false}{\mathit{false}} 
\newcommand{\elem}{\mathit{elem}}
\newcommand{\ex}{\mathit{ex}}
\newcommand{\opn}[1]{\textit{#1}} 
\newcommand{\Solve}{\textit{Solve}} 
\newcommand{\expr}{e} 
\newcommand{\deno}[1]{[[#1]]} 
\newcommand{\bag}[1]{\{\!|#1|\!\}} 
\newcommand{\se}[1]{{[#1]}} 
\newcommand{\sem}[2]{{[[#1]]_{#2}}} 
\newcommand{\seg}[2]{{{#2}^{(#1)}}}
\newcommand{\Result}{\mathit{Result}}
\newcommand{\Print}{\mathit{Print}}  
\newcommand{\Graph}{\mathit{Gr}}  
\newcommand{\param}{\mathit{param}}
\newcommand{\var}{\mathit{var}}
\newcommand{\hide}[1]{}
\begin{document}
\title{A Rule-based Operational Semantics of Graph
  Query Languages
  \thanks{Partly supported by the French ANR project VERIGRAPH \#~ANR-21-CE48-0015}}
\author{Dominique Duval\inst{1}\orcidID{ 0000-0002-4080-2783} \and
Rachid Echahed\inst{2}\orcidID{0000-0002-8535-8057} \and
Fr\'ed\'eric Prost\inst{2}\orcidID{0000-0001-6947-4819}}
\authorrunning{D. Duval, R. Echahed \& F. Prost}
\titlerunning{A Rule-based Operational Semantics of Graph
  Query Languages.}

\institute{LJK - Univ. Grenoble Alpes, Grenoble, France\\
\email{dominique.duval@univ-grenoble-alpes.fr}\\
 \and
LIG - CNRS and Univ. Grenoble Alpes, Grenoble, France\\
\email{rachid.echahed@imag.fr,\ \ frederic.prost@univ-grenoble-alpes.fr}}
\maketitle              

\begin{abstract}

    We consider a core language of graph queries. These queries are seen
  as formulas to be solved with respect to graph-oriented databases.
  For this purpose, we first define a graph query algebra where some
  operations over graphs and sets of graph homomorphisms are
  specified. Then, the notion of pattern is introduced to represent a
  kind of recursively defined formula over graphs. The syntax and
  formal semantics of patterns are provided. Afterwards, we propose a
  new sound and complete calculus to solve patterns. This calculus,
  which is based on a rewriting system, develops only one derivation
  per pattern to be solved. Our calculus is generic in the sense that
  it can be adapted to different kinds of graph databases provided
  that the notions of graph and graph homomorphism (match) are well
  defined.

  \keywords{Operational semantics, Rewrite systems, Graph query languages}
\end{abstract}

\section{Introduction}

Rewriting techniques have been widely used in different areas such as
operational semantics of declarative languages or automated theorem
proving. In this paper, our main aim is to propose to use such
techniques in the case of graph-oriented database languages.

Current developments in database theory show a clear shift from
relational to graph-oriented databases. Relational databases are now
well mastered and have been largely investigated in the literature
with an ISO standard language SQL \cite{Chamberlin1974,Date:1987:GSS}.
On the other side, the wide use of graphs as a flexible data model for
numerous database applications \cite{SakrBVIAAAABBDV21} as well as the
emergence of various languages such as SPARQL \cite{sparql}, Cypher
\cite{cypher} or G-CORE \cite{AnglesABBFGLPPS18} to quote a few. An
ongoing ISO project of a standard language, called GQL, has emerged
recently for graph-oriented databases
\footnote{https://www.gqlstandards.org/}.

Representing data graphically is quite legible. However, there is
always a dilemma in choosing the right notion of graphs when modeling
applications. This issue is already present in some well investigated
domains such as modeling languages \cite{BorkKP20} or graph transformation
\cite{handbook1}. Graph-oriented data representation does not escape from such
dilemma. We can quote for example RDF graphs \cite{rdf} on which SPARQL
is based or Property Graphs \cite{cypher} currently used in several
languages such as Cypher, G-CORE or the forthcoming GQL language.

In addition to the possibility of using different graph representations
for data, graph database languages feature new kinds of queries such as
graph-to-graph queries, cf. CONSTRUCT queries in SPARQL or G-CORE, besides the
classical graph-to-relation (table) queries such as SELECT or MATCH
queries in SPARQL or Cypher. The former constitute a class of queries
which transforms a graph database to another graph database. The
later transforms a graph to a multiset 
of solutions represented in general
by means of a table 
just as in the classical relational framework.

In general, graph querying processing integrates features shared with
graph transformation techniques (database transformation) and goal
solving (variable assignments). Our main aim in this paper is to
define an operational semantics, based on rewriting techniques, for 
graph-oriented queries. We propose a generic rule-based calculus, called
\emph{\rel}\ which is parameterized by the actual interpretations
of graphs and their matches (homomorphisms).  That is to say, the
obtained calculus can be adapted to different definitions of graph
and the corresponding notion of match. The
proposed calculus consists on a dedicated rewriting system and a
narrowing-like procedure which follows closely the formal semantics of patterns or
queries, the same way as (SLD-)Resolution calculus is related to
formal models underlying Horn or Datalog clauses. The use of rewriting
techniques in defining the proposed operational semantics paves the
way to syntactic analysis and automated verification techniques for
the proposed core language.

In order to define a sound and complete calculus, we first propose a
uniform formal semantics for queries. Actually, we do consider
graph-to-graph queries and graph-to-table queries as two facets of one
same syntactic object that we call \emph{pattern}. The semantics of a
pattern is a set of matches, that is to say, a set of graph
homomorphisms and not only a set of variable assignments as proposed
in \cite{AnglesABHRV17,cypher}. 
From such set of matches, one can easily display either the
tables by considering the images of the variables as defined by the
matches or the graph target of the matches or even
both tables and graphs. Our semantics for patterns allows us to write
nested patterns in a natural way, that is, new data graphs can be
constructed on the fly before being queried.

The paper is organized as follows: next section introduces a graph
query algebra featuring some key operations needed to express the
proposed calculus. Section~\ref{sec:gral} defines the syntax of
patterns and queries as well as their formal semantics. In
Section~\ref{sec:opsemantics}, a sound and complete calculus is
given. First we introduce a rewriting system describing how query
results 
are found. Then, we define \rel, which is associated
with the proposed rules. Concluding remarks and related work are given
in Section~\ref{sec:conclusion}.

\section{Graph Query Algebra}
\label{sec:algebra}

During a query answering process, different intermediate results 
can be computed and composed. In this section, we introduce a Graph Query
Algebra $\gqa$ which consists of a family of operations over graphs,
matches (graph homomorphisms) and expressions.  These
operations are used later on to define the semantics of queries, see
Sections~\ref{sec:gral} and~\ref{sec:opsemantics}.

\subsection{Signature for the Graph Query Algebra}
\label{ssec:algebra-signature}

The algebra $\gqa$ is defined over a signature.
The main sorts of
this signature are \Gr, \Som, \Exp and \Var to be interpreted as
graphs, sets of matches, expressions and variables, respectively, as
explained in Sections~\ref{ssec:algebra-graphs},
\ref{sec:more_on_matches}, \ref{ssec:gral-expr}
and~\ref{ssec:algebra-operation}.  The sort \Var is a subsort of \Exp.
The main operators of the signature are:   
\begin{itemize}
\item $\opn{Match} : \Gr, \Gr \to \Som $
\item $\opn{Join} : \Som, \Som \to \Som$
\item $\opn{Bind} : \Som, \Exp, \Var \to \Som$
\item $\opn{Filter} : \Som, \Exp \to \Som$
\item $\opn{Build} : \Som, \Gr \to \Som$
\item $\opn{Union} : \Som, \Som \to \Som$
\end{itemize}

The above sorts and operations are given as an
indication while being inspired by concrete languages. They may be
modified or adapted according to actual graph-oriented query languages.

\subsection{An Actual Interpretation of Graphs}
\label{ssec:algebra-graphs}

Various interpretations of sorts $\Gr$ and  $\Som$ can be given.
In order to provide concrete examples, we have to fix an actual
interpretation of these  sorts. For all the examples given in the paper,
we have chosen to interpret the sort $\Gr$ as generalized RDF graphs.
We could of course have chosen other notions of graphs such as
property graphs. Our choice here is motivated by the simplicity of RDF
graph definition (set of triples).

Below, we define generalized RDF graphs.They are the usual RDF graphs but
they may contain isolated nodes. 
Let $\Ll$ be a set, called the set of \emph{labels}, 
made of the union of two disjoint sets $\mC$ and $\V$, called respectively
the set of \emph{constants} and the set of \emph{variables}.

\begin{definition}[graph] 
\label{def:algebra-graph}
Every element $t=(s,p,o)$ of $\Ll^3$ is called a \emph{triple}  
and its members $s$, $p$ and $o$ are called respectively 
the \emph{subject}, the \emph{predicate} and the \emph{object} of $t$.
A \emph{graph} $G$ is a pair $G=(G_N,G_T)$ made of a subset $G_N$ of $\Ll$
called the set of \emph{nodes} of $G$ and a subset $G_T$ of $\Ll^3$ 
called the set of \emph{triples} of $G$, such that the subject and the object
of each triple of $G$ are nodes of $G$.
The nodes of $G$ which are neither a subject nor an object are called
the \emph{isolated nodes} of $G$.
The set of \emph{labels} of a graph $G$ is the subset $\Ll(G)$
of $\Ll$ made of the nodes and predicates of $G$, then 
$\mC(G)=\mC\cap\Ll(G)$ and $\V(G)=\V\cap\Ll(G)$.
The graph with an empty set of nodes and an empty set of triples
is called the \emph{empty graph} and is denoted by $\emptygraph$.
Given two graphs $G_1$ and $G_2$, 
the graph $G_1$ is a \emph{subgraph} of $G_2$, written $G_1 \subseteq G_2$,
if $(G_1)_N \subseteq (G_2)_N$ and $(G_1)_T \subseteq (G_2)_T$,
then $\Ll(G_1) \subseteq \Ll(G_2)$.
The \emph{union} $G_1\cup G_2$ is the graph 
defined by $(G_1\cup G_2)_N = (G_1)_N \cup (G_2)_N$ and
$(G_1\cup G_2)_T = (G_1)_T \cup (G_2)_T$,
then $\Ll(G_1\cup G_2)=\Ll(G_1)\cup \Ll(G_2)$.
\end{definition}

In the rest of the paper we write graphs as a couple made of a set of triples and a set of  nodes:
for example the graph $G = (\{n_1,n_2,s_1,p_1\} \,,\, \{(s_1,o_1,p_1)\})$
which is made of four nodes and one triple is written as
$G = \{(s_1,o_1,p_1),n_1, n_2\}$.
\begin{example}
\label{ex:database}

We define a toy database which is used as a running example throughout the paper.
The database consists of {\em persons} who {\em are} either {\em professors} or {\em students},
with {\em topics}
such that each professor {\em teaches} some topics 
and each student {\em studies} some topics.
\begin{center}
\begin{tabular}{llll} 
    $G_\ex= \{$ & (Alice, is, Professor), & (Alice, teaches, Mathematics), & \\
    & (Bob, is, Professor), & (Bob, teaches, Informatics), & \\
    & (Charlie, is, Student), & (Charlie, studies, Mathematics), & \\
    & (David, is, Student), & (David, studies, Mathematics), & \\ 
    & (Eric, is, Student), & (Eric, studies, Informatics) & $ \}$\\
  \end{tabular}
\end{center}

\end{example}

Below, we define the notion of \emph{match} which will be used, notably, 
to represent results of queries.

\begin{definition}[match]
\label{def:algebra-match}
A \emph{graph homomorphism} from a graph $L$ to a graph $G$,
denoted $m:L\to G$, is a function from $\Ll(L)$ to $\Ll(G)$
which \emph{preserves nodes} and \emph{preserves triples}, in the sense that 
$m(L_N)\subseteq G_N$ and $m^3(L_T)\subseteq G_T$. 
A \emph{match} is a graph homomorphism $m:L\to G$ 
which \emph{fixes} $\mC$, in the sense that 
$m(c)=c$ for each $c$ in $\mC(L)$.
\end{definition}

When $n$ is an isolated node of $L$ then the node $m(n)$ does not
have to be isolated in $G$.
A match $m:L\to G$ determines two functions 
$m_N:L_N\to G_N$ and $m_T:L_T\to G_T$, restrictions of $m$ and $m^3$ 
respectively. 
A match $m:L\to G$ is 
invertible if and only if 
both functions $m_N$ and $m_T$ are bijections.
This means that a function $m$ from $\Ll(L)$ to $\Ll(G)$
is an 
invertible match if and only if
$\mC(L)=\mC(G)$ with $m(c)=c$ for each $c\in\mC(L)$
and $m$ is a bijection from $\V(L)$ to $\V(G)$:
thus, $L$ is the same as $G$ up to variable renaming.
It follows that the symbol used for naming a variable does not matter
as long as graphs are considered only up to invertible matches.

Notice that RDF graphs \cite{rdf} are graphs according to
Definition~\ref{def:algebra-graph} but 
without isolated nodes, and where constants are either IRIs
(Internationalized Resource Identifiers) or literals
and where all predicates are IRIs and only objects can be literals.
Blank nodes in RDF graphs are the same as variable nodes in our graphs. 
An isomorphism of RDF graphs, as defined in~\cite{rdf},
is an invertible match. 
isomorphism of graphs as in Definition~\ref{def:algebra-match}.

\subsection{More Definitions on Matches}
\label{sec:more_on_matches}

Below we introduce some useful definitions on matches. 
Notice that we do not consider a  match $m$ as a simple variable
assignment but rather as a graph homomorphism with a clear source and target
graphs. This nuance in the definition of matches is important in the
rest of the paper.

\begin{definition}[compatible matches] 
\label{def:algebra-compatible}  
Two matches $m_1:L_1\to G_1$ and $m_2:L_2\to G_2$ are \emph{compatible},
written as $m_1\sim m_2$, if $m_1(x)=m_2(x)$ for each $x\in\V(L_1)\cap\V(L_2)$. 
Given two compatible matches $m_1:L_1\to G_1$ and $m_2:L_2\to G_2$, 
let $m_1\bowtie m_2 : L_1\cup L_2 \to G_1\cup G_2$
denote the unique match such that $m_1\bowtie m_2\sim m_1$
and $m_1\bowtie m_2\sim m_2$ 
(which means that $m_1\bowtie m_2$ coincides with $m_1$ on $L_1$
and with $m_2$ on $L_2$). 
\end{definition}

\begin{definition}[building a match]
  \label{def:algebra-build}
Let $m: L \to G$ be a match and $R$ a graph. 
The match $\opn{Build}(m,R) : R\to G\cup H_{m,R}$
is the unique match (up to variable renaming)
such that for each variable $x$ in $R$: 
$$ \opn{Build}(m,R)(x) =
\begin{cases}
  m(x) \mbox{ when } x \in \V(R)\cap\V(L), \\
  \mbox{some fresh variable } \var(m,x) \mbox{ when } x \in \V(R) -\V(L). \\
\end{cases}$$
and $H_{m,R}$ is the image of $R$ by $\opn{Build}(m,R)$.
\end{definition}

\begin{definition}[set of matches, assignment table] 
\label{def:algebra-table}
Let $L$ and $G$ be graphs.
A set $\ulm$ of matches, all of them from $L$ to $G$,
is denoted $\ulm:L\To G$ and called a \emph{homogeneous set of matches},
or simply a \emph{set of matches},
with \emph{source} $L$ and \emph{target} $G$.
The \emph{image} of $L$ by $\ulm$ is the subgraph
$\ulm(L)=\cup_{m\in\ulm}(m(L))$ of $G$.
We denote $\opn{Match}(L,G):L\To G$ the set of all matches from $L$ to $G$. 
When $L$ is the empty graph this set has one unique element which 
 is the inclusion of $\emptygraph$ into $G$, then we denote
$\uli_G=\opn{Match}(\emptygraph,G):\emptygraph\To G$ this one-element set 
and $\emptysetbis_G:\emptygraph\To G$ its empty subset.
The \emph{assignment table} $\tbl(\ulm)$ of $\ulm$ is the two-dimensional
table with 
the elements of $\V(L)$ in its first row, then one row for each $m$ in $\ulm$, 
and the entry in row $m$ and column $x$ equals to $m(x)$.
\end{definition}

Thus, the assignment table $\tbl(\ulm)$ describes the set of functions
$\ulm|_{\V(L)}:\V(L)\To\Ll$,
made of the functions $m|_{\V(L)}:\V(L)\to\Ll$ for all $m\in\ulm$.
A set of matches $\ulm:L \To G$ is determined by the graphs $L$ and $G$
and the assignment table $\tbl(\ulm)$.

\begin{example}
  \label{ex:set-of-matches}

In order to determine when professor $?p$ teaches topic $?t$ 
which is studied by student~$?s$ 
we may consider the following graph $ L_\ex$,
where $?p$, $?t$ and $?s$ are variables. In all examples,
variables are preceded by a ``?''.

\begin{center}
  $ L_\ex=\{$ ($?p$, teaches, $?t$), ($?s$, studies, $?t$) $\} $
\end{center}
There are 3 matches from $L_\ex$ to $G_\ex$. 
The set $\ulm_\ex$ of all these matches is: 
 $$ \ulm_\ex:L_\ex \To G_\ex \;\mbox{ with }\;
   \tbl(\ulm_\ex) = 
\mbox { \begin{tabular}{|l|l|l|}
     \hline
     \multicolumn{1}{|c|}{$?p$} &
     \multicolumn{1}{|c|}{$?t$} & 
     \multicolumn{1}{|c|}{$?s$} \\
     \hline
       Alice & Mathematics & Charlie \\
       Alice & Mathematics & David \\
       Bob  & Informatics & Eric \\ 
   \hline
  \end{tabular}
  }
   $$

\end{example}

\subsection{Expressions} 
\label{ssec:gral-expr}

Query languages usually feature a term algebra dedicated to
express operations over integers, booleans and so forth.
We do not care here about the way basic operations are chosen  
but we want to deal with aggregation operations as in most 
database query languages.
 Thus, one can think of any kind of term algebra with
 operators which are classified as either basic operators (unary or
 binary)
and aggregation operators (always unary). 
We consider that all expressions are well typed.
Typically, and not exclusively, the sets $\Op_1$, $\Op_2$ and $\Agg$
of \emph{basic unary} operators, \emph{basic binary} operators 
and \emph{aggregation} operators can be: 
\\ \hsp $ \Op_1=\{-,\mathrm{NOT}\}\,,$
\\ \hsp $ \Op_2=\{+,-,\times,/,=,>,<,\mathrm{AND},\mathrm{OR}\}\,,$
\\ \hsp $\Agg = \Agg_{\elem} \cup \{\agg\;\mathrm{DISTINCT} \mid \agg\in\Agg_{\elem}\} $
\\ \hsp\hsp where $\Agg_{\elem} =
\{\mathrm{MAX},\mathrm{MIN},\mathrm{SUM},\mathrm{AVG},\mathrm{COUNT}\}$. 
\\ A \emph{group of expressions} is a non-empty finite list of expressions. 

\begin{definition}[syntax of expressions]\ 
\label{def:gral-syn-expr} 
\emph{Expressions} $\expr$ and their sets of \emph{in-scope}
variables $\V(\expr)$ are defined recursively as follows,  with
$c\in\mC$,  $x\in\V$, $\op_1\in\Op_1$, $\op_2\in\Op_2$, $\agg\in\Agg$,
$\gp$ is a group of expressions:

$\expr ::= c \mid x \mid \op_1\;\expr \mid \expr \;\op_2\;\expr \mid
\agg(\expr_1) \mid \agg(\expr_1\mbox{ {\rm BY} }\gp).$

 $\V(c)=\emptysetbis$, $\V(x)=\{x\}$, $\V(\op_1\;\expr)=\V(\expr)$,
 $\V(\expr_1\;\op_2\;\expr_2)=\V(\expr_1)\cup\V(\expr_2)$,
 $\V(\agg(\expr))=\V(\expr)$,

$\V(\agg(\expr \mbox{ {\rm BY} }\gp))=\V(\expr)$ 
(the variables in $\gp$ must be distinct from those in $\expr$).
 \end{definition}

The \emph{value} of an expression with respect to a set of matches $\ulm$ 
(Definition~\ref{def:gral-sem-expr}) is a family of constants 
$\uleval{\ulm}{\expr}=(\eval{\ulm}{\expr}{m})_{m\in\ulm}$ indexed by
the set $\ulm$. 
When the expression $\expr$ is free from any aggregation operator
then $\eval{\ulm}{\expr}{m}$ is simply $m(\expr)$. 
But in general $\eval{\ulm}{\expr}{m}$ depends on $\expr$ and $m$ 
and it may also depend on other matches in $\ulm$
when $\expr$ involves aggregation operators.
The \emph{value} of a group of expressions $\gp=(\expr_1,...,\expr_k)$ 
with respect to $\ulm$ is the list 
$\ev(\ulm,\gp)=(\ev(\ulm,\expr_1)_m,...,\ev(\ulm,\expr_k)_m)_{m \in \ulm}$. 
To each basic operator $\op$ is associated a function $\deno{op}$
(or simply $\op$) 
from constants to constants if $\op$ is unary
and from pairs of constants to constants if $\op$ is binary.
To each aggregation operator $\agg$ in $\Agg$ is associated a function
$\deno{agg}$ (or simply $\agg$)
from \emph{multisets} of constants to constants. 
Note that each family of constants determines a multiset of constants: 
for instance a family $\ulc=(c_m)_{m\in\ulm}$ of constants indexed by the
elements of a set of matches $\ulm$ determines the multiset of constants
$\bag{c_m\mid m\in\ulm}$, 
which is also denoted $\ulc$ when there is no ambiguity.
Some aggregation operators $\agg$ in $\Agg_{\elem}$ are such that 
$\deno{agg}(\ulc)$ depends only on the set underlying the multiset $\ulc$, 
which means that $\deno{agg}(\ulc)$ 
does not depend on the multiplicities in the multiset $\ulc$:
this is the case for MAX and MIN but not for SUM, AVG and COUNT.
When $\agg=\agg_{\elem}\;\mathrm{DISTINCT}$ with $\agg_{\elem}$ in $\Agg_{\elem}$
then $\deno{agg}(\ulc)$ is $\deno{agg_{\elem}}$ applied to
the underlying set of $\ulc$.
For instance, $\clause{COUNT }(\ulc)$ counts the number of elements
of the multiset $\ulc$ with their multiplicies, while 
$\clause{COUNT DISTINCT }(\ulc)$ counts the number of distinct elements
in~$\ulc$. 

\begin{definition}[evaluation of expressions]
\label{def:gral-sem-expr}
Let $L$ be a graph, $\expr$ an expression over $L$ and $\ulm:L\To G$
a set of matches. 
The \emph{value} of $\expr$ with respect to $\ulm$ is the family 
$\uleval{\ulm}{\expr} = (\eval{\ulm}{\expr}{m})_{m\in\ulm}$ defined recursively
as follows. It is assumed that each $\eval{\ulm}{\expr}{m}$ in this definition
is a constant. 
\begin{itemize}
  \item $\eval{\ulm}{c}{m} = c$,
  \item $\eval{\ulm}{x}{m} = m(x)$,
\item $\eval{\ulm}{\op\;\expr_1}{m} = \deno{\op}\,\eval{\ulm}{\expr_1}{m}\,$,  
\item $\eval{\ulm}{\expr_1\;\op\;\expr_2}{m} =
  \eval{\ulm}{\expr_1}{m}\,\deno{\op}\,\eval{\ulm}{\expr_2}{m}\,$,
\item $\eval{\ulm}{\agg(\expr_1)}{m} = \deno{\agg}(\ev(\ulm,\expr_1))$,
\item $\eval{\ulm}{\agg(\expr_1\;BY\;\gp)}{m} =
  \deno{\agg}(\ev(\ulm|_{\gp,m},\expr_1))$ 
  where $\ulm|_{\gp,m}$ is the subset of $\ulm$ 
  made of the matches $m'$ in $\ulm$ such that
  $ \ev(\ulm,\gp)_{m'} = \ev(\ulm,\gp)_m$.
\end{itemize}
Note that $\eval{\ulm}{\agg(\expr_1)}{m}$ is the same for all $m$ in $\ulm$
while $\eval{\ulm}{\agg(\expr_1\;BY\;\gp)}{m}$ is the same for all $m$
and $m'$ in $\ulm$ such that $\eval{\ulm}{\gp}{m}=\eval{\ulm}{\gp}{m'}$.

\end{definition} 

\subsection{Operations}
\label{ssec:algebra-operation}

The sorts $\Gr$, $\Som$, $\Exp$ and $\Var$ 
of the signature in Section~\ref{ssec:algebra-signature} are
interpreted in the algebra $\gqa$ respectively as 
the set of graphs (Definition~\ref{def:algebra-graph}),
the set of homogeneous sets of matches (Definition~\ref{def:algebra-table}), 
the set of expressions (Definition~\ref{def:gral-syn-expr})
and its subset of variables. 
Then the operators of the signature are interpreted in the algebra $\gqa$
by the operations with the same name in Definition~\ref{def:algebra-opn}.
Whenever needed, we extend the target of matches:
for every graph $H$ and every match $m:L\to G$ where $G$ is a subgraph of $H$
we denote $m:L\to H$ when $m$ is considered as a match from $L$ to $H$.

\begin{definition}[$\gqa$ operations]\
\label{def:algebra-opn}
\begin{itemize}
\item 
For all graphs $L$ and $G$: 
\\ \hsp $\opn{Match}(L,G):L\To G $ is the set of all matches from $L$ to $G$.
\item 
For all sets of matches $\ulm:L\To G$ and $\ulp:R\To H$: 
\\ \hsp $\opn{Join}(\ulm,\ulp)
= \{ m\bowtie p \mid m\in\ulm \lwedge p\in\ulp \lwedge m\sim p\}
: L \cup R \To G\cup H$.
\item
  For every set of matches $\ulm:L\To G$,
every expression $e$ and every variable~$x$,
let $p_m(x)=\eval{\ulm}{e}{m}$ for each $m\in\ulm$. Then: 
\\ \hsp $\opn{Bind}(\ulm,e,x) \!=\! 
\{ m\!\bowtie\! p_m \mid m\!\in\!\ulm \!\lwedge\! m\!\sim\! p_m\} \!:\!
L\cup\{x\} \!\To\! G\cup\{p_m(x) \mid m\!\in\!\ulm \} $.
\\ Equivalently, this can be expressed as follows:
\\ if $x\in\V(L)$ then $\opn{Bind}(\ulm,e,x) =
\{ m \mid m\in\ulm \lwedge m(x)=p_m(x) \}:L\To G$, 
\\ otherwise
$\opn{Bind}(\ulm,e,x)\!=\!\{ m\bowtie p_m \mid m\!\in\!\ulm \} \!:\!
L\cup\{x\} \!\To\! G\cup\{p_m(x) \mid m\!\in\!\ulm \} $.
\item For every set of matches $\ulm:L\To G$ and every expression $e$: 
\\ \hsp $\opn{Filter}(\ulm,e) = 
\{ m \mid m\in\ulm \lwedge \eval{\ulm}{\expr}{m} = \true \}:L\To G$.
\item For every set of matches $\ulm:L\To G$ and every graph $R$: 
\\ \hsp $\opn{Build}(\ulm,R) =
\{ \opn{Build}(m,R) \mid m\in\ulm \}:R\To G \cup \opn{Build}(\ulm,R)(R) $
\\ \hsp where $\opn{Build}(\ulm,R)(R)=\cup_{m\in\ulm} \opn{Build}(m,R)(R) $.
\item For all sets of matches $\ulm:L\To G$ and $\ulp:L\To H$:
\\ \hsp $\opn{Union}(\ulm,\ulp)
= (\ulm : L \To G\cup H) \;\cup\; (\ulp : L \To G\cup H)
: L \To G\cup H $.
\end{itemize}
\end{definition}

\section{Patterns and Queries} 
\label{sec:gral}

Syntax of graph-oriented dabases is still evolving. We do not consider
all technical syntactic details of a real-world language nor all
possible constraints on matches. We focus on a core language. Its
syntax reflects significant aspects of graph-oriented
queries. Conditions on graph paths, which can be seen as constraints
on matches, are omitted in this paper in order not to make the
syntax too cumbersome.
We consider mainly two syntactic categories: \emph{patterns} and
\emph{queries}, in addition to \emph{expressions} already mentioned in
Section~\ref{ssec:gral-expr}.
Queries are either $\clause{SELECT}$ queries, as in most query languages, 
$\clause{CONSTRUCT}$ queries, as in $\sparql$ and G-CORE, or the new
$\clause{\conselect}$ queries introduced in this paper.
A $\clause{SELECT}$ query applied to a graph returns a \emph{table}
which describes a multiset of \emph{solutions} or variable bindings,
while a $\clause{CONSTRUCT}$ query applied to a graph returns a graph.
A $\clause{\conselect}$ query applied to a graph returns both a graph
and a table.
On the other hand, a pattern applied to a graph returns a set of matches.
Patterns are the basic blocks for building queries.
They are defined in Section~\ref{ssec:gral-pattern}
together with their semantics.
Queries are defined in Section~\ref{ssec:gral-query} and 
their semantics is easily derived from the semantics of patterns.
In this Section, as in Section~\ref{sec:algebra},  
the set of \emph{labels} $\Ll$ is the union of the disjoint sets
$\mC$ and $\V$, of \emph{constants} and \emph{variables} respectively.
We assume that the set $\mC$ of constants contains the numbers
and strings and the boolean values $\true$ and $\false$.

\subsection{Patterns}
\label{ssec:gral-pattern}

In Definition~\ref{def:gral-syn-pattern} patterns are built
from graphs by using six operators: 
$\clause{BASIC}$, $\clause{JOIN}$, $\clause{BIND}$, $\clause{FILTER}$,
$\clause{BUILD}$ and $\clause{UNION}$.
Then, in Definition~\ref{def:gral-sem-pattern} the formal semantics of patterns
is given by an evaluation function.

\begin{definition}[syntax of patterns]
  \label{def:gral-syn-pattern}
  \emph{Patterns} $P$ and their \emph{scope graph} $\se{P}$ 
are defined recursively as follows.
\begin{itemize}
\item The symbol $\emptypattern$ is a pattern,
  called the \emph{empty pattern},
  and $\se{\emptypattern}$ is the empty graph $\emptygraph$. 
\item If $L$ is a graph then $P=\clause{BASIC}(L)$ is a pattern,
  called a \emph{basic pattern},
  and $\se{P} = L$. 
\item If $P_1$ and $P_2$ are patterns then
  $P=P_1\clause{ JOIN }P_2$ is a pattern
  and $\se{P} = \se{P_1}\cup\se{P_2}$.
\item If $P_1$ is a pattern, $\expr$ an expression such that
  $\V(\expr)\subseteq\V(\se{P_1})$ and $x$ a variable 
  then $P=P_1\clause{ BIND }\expr\clause{ AS }x$ is a pattern
  and $\se{P} = \se{P_1}\cup\{x\}$. 
  \item If $P_1$ is a pattern and $\expr$ an expression such that
  $\V(\expr)\subseteq\V(\se{P_1})$ 
  then $P=P_1\clause{ FILTER }\expr$ is a pattern 
  and $\se{P} = \se{P_1}$.
\item If $P_1$ is a pattern and $R$ a graph 
  then $P=P_1\clause{ BUILD }R$ is a pattern  
  and $\se{P} = R$.
\item If $P_1$ and $P_2$ are patterns such that $\se{P_1} = \se{P_2}$ 
  then $P=P_1\clause{ UNION }P_2$ is a pattern with 
   $\se{P} = \se{P_1} = \se{P_2}$.
\end{itemize}
\end{definition}

The \emph{value} of a pattern over a graph is a set of matches, as defined now.

\begin{definition}[evaluation of patterns, set of solutions]
\label{def:gral-sem-pattern}
The \emph{set of solutions} or the \emph{value} of a pattern $P$ over 
a graph $G$ is a set of matches $\sem{P}{G}:\se{P}\To\seg{P}{G}$
from the scope graph $\se{P}$ of $P$ 
to a graph $\seg{P}{G}$ that contains $G$.
This value $\sem{P}{G}:\se{P}\To\seg{P}{G}$ is 
defined inductively as follows:
\begin{itemize}
\item  $\sem{\emptypattern}{G}
  = \emptysetbis_G:\emptygraph\To G $.
  \item
    $\sem{BASIC(L)}{G}
  = \opn{Match}(L,G)
  :L\To G $.
\item $\sem{P_1\clause{ JOIN }P_2}{G}
  = \opn{Join}(\sem{P_1}{G},\sem{P_2}{\seg{P_1}{G}})
  : \se{P_1}\cup\se{P_2}\To \seg{P_2}{\seg{P_1}{G}}$.
\item $\sem{P_1\clause{ BIND }\expr\clause{ AS }x}{G}
  = \opn{Bind}(\sem{P_1}{G},\expr,x)
  : \se{P_1}\cup\{x\}\To \seg{P_1}{G}\cup{\sem{P_1}{G}(\expr)} $.
\item $\sem{P_1\clause{ FILTER }\expr}{G}
  = \opn{Filter}(\sem{P_1}{G},\expr)
  : \se{P_1}\To \seg{P_1}{G} $.
\item $\sem{P_1\clause{ BUILD }R}{G}
  = \opn{Build}(\sem{P_1}{G},R)
  : R\To \seg{P_1}{G} \cup \sem{P_1}{G}(R)$. 
\item $\sem{P_1\clause{ UNION }P_2}{G} 
  = \opn{Union}(\sem{P_1}{G},\sem{P_2}{\seg{P_1}{G}})
  : \se{P_1} \To \seg{P_2}{\seg{P_1}{G}} $. 
\end{itemize}
\end{definition}

\begin{remark} 
  \label{rem:gral-sem-pattern}
In all cases, the graph $\seg{P}{G}$ is built by adding to $G$
``whatever is required'' for the evaluation.  
When $P$ is the empty pattern, the value of $P$ over $G$
is the empty subset of $\opn{Match}(\emptygraph,G)$.
Syntactically, each operator $\clause{OP}$ builds a pattern $P$
from a pattern $P_1$ and a parameter $\param$,
which is either a pattern $P_2$ (for $\clause{JOIN}$ and $\clause{UNION}$),
a pair $(\expr,x)$ made of an expression and a variable (for $\clause{BIND}$),
an expression $\expr$ (for $\clause{FILTER}$)
or a graph $R$ (for $\clause{BUILD}$).
Semantically, for every pattern $P=P_1\clause{ OP }\param$,
let us denote $\ulm_1:X_1\To G_1$ for $\sem{P_1}{G}:\se{P_1}\To\seg{P_1}{G}$
and $\ulm:X\To G'$ for $\sem{P}{G}:\se{P}\To\seg{P}{G}$. 
In every case it is necessary to evaluate $\ulm_1$ before evaluating $\param$:
for $\clause{JOIN}$ and $\clause{UNION}$ this is because pattern $P_2$
is evaluated on $G_1$,
for $\clause{BIND}$ and $\clause{FILTER}$ because expression $\expr$
is evaluated with respect to $\ulm_1$,
and for $\clause{BUILD}$ because of the definition of $\opn{Build}$.
Note that the semantics of $P_1\clause{ JOIN }P_2$ and $P_1\clause{ UNION }P_2$
is not symmetric in $P_1$ and $P_2$ in general, unless $\seg{P_1}{G}=G$
and $\seg{P_2}{G}=G$, which occurs when $P_1$ and $P_2$ are basic patterns.
Given a pattern $P=P_1\clause{ OP }\param$, the pattern 
$P_1$ is a \emph{subpattern} of $P$, as well as $P_2$ when
$P=P_1\clause{ JOIN }P_2$ or $P=P_1\clause{ UNION }P_2$.
The semantics of patterns is defined in terms of the semantics of
its subpatterns (and the semantics of its other arguments, if any).
Thus, for instance, BUILD patterns can be nested at any depth.
\end{remark}

\begin{definition} 
\label{def:gral-sem-scope}
For every pattern $P$, the set $\V(P)$ of in-scope variables of $P$
is the set $\V(\se{P})$ of variables of the scope graph $\se{P}$.
An expression $\expr$ is \emph{over} a pattern $P$ if
$\V(\expr)\subseteq\V(P)$. 
\end{definition} 

\begin{example}
\label{ex:gral-sem-pattern}

Let $R_\ex$ be the following graph, where $?p$, $?z$ and $?s$ are variables.
\begin{center}
  $ R_\ex=\{$ ($?p$, teaches, $?z$), ($?s$, studies, $?z$) $\} $
\end{center}
Note that $R_\ex$ is the same as $L_\ex$, except for the name of one variable.
In order to determine when professor $?p$ teaches some topic 
which is studied by student~$?s$, whatever the topic, 
we consider the following pattern $P_\ex$.
\begin{center}
\begin{tabular}{lll} 
  $ P_\ex $ & $=$ & 
  $\clause{ BASIC }(L_\ex) \clause{ BUILD }R_\ex$ \\
  & $=$ & $\clause{ BASIC }(\{$ 
($?p$, \mbox{teaches}, $?t$), ($?s$, \mbox{studies}, $?t$) $ \}$) \\
  & & $\clause{ BUILD }\{$
($?p$, \mbox{teaches}, $?z$), ($?s$, \mbox{studies}, $?z$) $ \}$ \\
  \end{tabular}
\end{center}
Note that the variable $?z$ in $R_\ex$ does not appear in $L_\ex$.
Since there are 3 matches from $L_\ex$ to $G_\ex$
(Example~\ref{ex:set-of-matches}),
the value of $P_\ex$ over $G_\ex$ is: 
$$ \ulp_\ex:R_\ex \To G'_\ex \;\mbox{ with }\;
   \tbl(\ulp_\ex) = 
\mbox { \begin{tabular}{|l|l|l|}
     \hline
     \multicolumn{1}{|c|}{$?p$} &
     \multicolumn{1}{|c|}{$?z$} & 
     \multicolumn{1}{|c|}{$?s$} \\
     \hline
       Alice & $?z_1$ & Charlie \\
       Alice & $?z_2$ & David \\
       Bob  & $?z_3$ & Eric \\ 
   \hline
  \end{tabular}
  }
   $$
where $?z_1$, $?z_2$ and $?z_3$ are 3 fresh variables and:
\begin{center}
  $ G'_\ex = G_\ex \cup \{$
  (Alice, teaches, $?z_1$),
  (Charlie, studies, $?z_1$),
  (Alice, teaches, $?z_2$),
  (David, studies, $?z_2$),
  (Bob, teaches, $?z_3$), 
  (Eric, studies, $?z_3$) $\}$
\end{center}

\end{example}

\subsection{Queries}
\label{ssec:gral-query}

We consider three kinds of queries~:
$\clause{CONSTRUCT}$ queries, $\clause{SELECT}$ queries and
$\clause{\conselect}$ queries. We define
the semantics of queries from the semantics of patterns.  According to
Definition~\ref{def:gral-sem-pattern}, all patterns have a
graph-to-set-of-matches semantics.  In contrast, $\clause{CONSTRUCT}$
queries have a graph-to-graph semantics and $\clause{SELECT}$ queries
have a graph-to-multiset-of-solutions or graph-to-table semantics while
$\clause{\conselect}$ have a graph-to-graph-and-table semantics.

\begin{definition}[syntax of queries]
  \label{def:gral-syn-query}
  Let $S$ be a set of variables, $R$ a graph and $P$ a pattern. 
A \emph{query} $Q$ has one of the following three shapes: 
\begin{enumerate}
\item      $\clause{ CONSTRUCT } R \clause{ WHERE } P$
\item $\clause{ SELECT } S \clause{ WHERE } P$
\item $\clause{ \conselect\ } S, R \clause{ WHERE } P$
\end{enumerate}
\end{definition}

\begin{definition}[result of $\clause{ CONSTRUCT }$ queries]
  \label{def:gral-sem-query-construct}
  Given a pattern $P_1$ and a graph $R$
  consider the query
  $Q= \clause{ CONSTRUCT } R \clause{ WHERE } P_1$ 
   and the pattern 
  $P = P_1 \clause{ BUILD } R$. 
 The \emph{result} of the query $Q$ over a graph $G$, denoted $\Result_C(Q,G)$,
is the subgraph of $\seg{P}{G}$ image of $R$ by the set of matches $\sem{P}{G}$.
\end{definition}

Thus, the result of a $\clause{ CONSTRUCT }$ query $Q$ over a graph
$G$ is the graph $\Result_C(Q,G)= \sem{P}{G}(R)$ built by ``gluing''
the graphs $m(R)$ for all matches $m$ in $\sem{P}{G}$, where $m(R)$ is
a copy of $R$ with each variable $x\in\V(R)-\V(P)$ replaced by a fresh
variable (which means, fresh for each $m$ and each $x$).

\begin{example}
\label{ex:gral-sem-query-construct}

Consider the query: 
\begin{center}
\begin{tabular}{lll} 
  $ Q_{C,\ex} $ & $=$ & 
  $\clause{ CONSTRUCT }R_\ex\clause{ WHERE }\clause{ BASIC }(L_\ex) $ \\
  & $=$ & $\clause{ CONSTRUCT }\{$
($?p$, \mbox{teaches}, $?z$), ($?s$, \mbox{studies}, $?z$) $ \}$ \\
  & & $\clause{ WHERE }\clause{ BASIC }(\{$ 
($?p$, \mbox{teaches}, $?t$), ($?s$, \mbox{studies}, $?t$) $ \}$) \\
  \end{tabular}
\end{center}
The corresponding pattern $P_\ex$ and the value $\ulp_\ex:R_\ex \To G'_\ex $
of $P_\ex$ over $G_\ex$ are as in Example~\ref{ex:gral-sem-pattern}.
It follows that the result of the query $Q_{C,\ex}$ over $G_\ex$
is the subgraph of $G'_\ex$ image of $R_\ex$ by $\ulp_\ex$:
\begin{center}
  $ \Result_C(Q_{C,\ex},G_\ex) = \{$
  (Alice, teaches, $?z_1$),
  (Charlie, studies, $?z_1$),
  (Alice, teaches, $?z_2$),
  (David, studies, $?z_2$),
  (Bob, teaches, $?z_3$), 
  (Eric, studies, $?z_3$) $ \}$.
\end{center}

\end{example}

\begin{remark} 
\label{rem:gral-sparql}
CONSTRUCT queries in $\sparql$ are similar to CONSTRUCT queries
considered in this paper:
the variables in $\V(R)-\V(P_1)$ 
play the same role as the blank nodes in $\sparql$.
By considering BUILD patterns,
thanks to the functional orientation of the definition of patterns, 
our language allows BUILD subpatterns:
this is new and specific to the present study.
\end{remark}

For $\clause{ SELECT }$ queries we proceed as for
$\clause{ CONSTRUCT } $ queries: we define a transformation from each
$\clause{ SELECT }$ query $Q$ to a $\clause{ BUILD }$ pattern $P$ and
a transformation from the result of pattern $P$ to the result of query
$Q$.  Definition~\ref{def:gral-sem-query-select} below would deserve
more explanations. However this is not the subject of this paper, see
\cite{DEP2021} for details about how turning a table to a graph.

\begin{definition}[result of $\clause{SELECT}$ queries]
  \label{def:gral-sem-query-select}
For every set of variables $S=\{s_1,...,s_n\}$, let
$ \Graph(S)$ denote the graph made of the triples $(r,c_j,s_j)$
for $j\in \{1,...,n\}$
where $r$ is a fresh variable and $c_j$ is a fresh constant string for each $j$.
Given a pattern $P_1$ and a set of variables $S=\{s_1,...,s_n\}$ 
consider the query $Q= \clause{ SELECT } S \clause{ WHERE } P_1$ 
and the pattern $P = P_1 \clause{ BUILD } \Graph(S) $.
The value of $P$ over a graph $G$ is a set of matches $\sem{P}{G}$
which assignment table has $n+1$ columns, corresponding to the
variables $r,s_1,...,s_n$. 
The \emph{result} of the query $Q$ over a graph $G$, denoted $\Result_S(Q,G)$,
is the multiset of solutions made of the rows of the assignment table
of $\sem{P}{G}$ after dropping the column $r$.
\end{definition}

\begin{example}
  \label{ex:gral-sem-query-select}

Consider the query: 
\begin{center}
\begin{tabular}{lll} 
  $ Q_{S,\ex} $ & $=$ & 
  $\clause{ SELECT } \{?p,\,?s\} \clause{ WHERE }\clause{ BASIC }(L_\ex) $ \\
  \end{tabular}
\end{center}
Let $R_{S,\ex} = \Graph(\{?p,\,?s\}) = \{(?r,A_p,?p),\,(?r,A_s,?s)\}$
where $?r$ is a fresh variable and $A_p$, $A_s$ are fresh distinct strings.
Then the pattern corresponding to $Q_{S,\ex}$ is: 
  $$ P_{S,\ex} = 
  \clause{ BASIC }(L_\ex) \clause{ BUILD }R_{S,\ex}$$
The value of $ P_{S,\ex}$ over $G_\ex$ is:
$$ \ulp_{S,\ex}:R_{S,\ex} \To G'_{S,\ex} \;\mbox{ with }\;
   \tbl(\ulp_{S,\ex}) = 
\mbox { \begin{tabular}{|l|l|l|}
     \hline
     \multicolumn{1}{|c|}{$?r$} &
     \multicolumn{1}{|c|}{$?p$} & 
     \multicolumn{1}{|c|}{$?s$} \\
     \hline
       $?r_1$ & Alice & Charlie \\
       $?r_2$ & Alice & David \\
       $?r_3$ & Bob  & Eric \\ 
   \hline
  \end{tabular}
  }
   $$
where $?r_1$, $?r_2$ and $?r_3$ are 3 fresh variables and:
\begin{center}
  $ G'_{S,\ex} = G_\ex \cup \{$
  ($?r_1$, $A_p$, Alice), ($?r_1$, $A_s$, Charlie),
  ($?r_2$, $A_p$, Alice), \\ ($?r_2$, $A_s$, David), 
  ($?r_3$, $A_p$, Bob), ($?r_3$, $A_s$, Eric) $ \}$
\end{center}
It follows that:
\begin{center}
  $ \Result_S(Q_{S,\ex},G_\ex) = $
\begin{tabular}{|l|l|}
     \hline
     \multicolumn{1}{|c|}{$?p$} & 
     \multicolumn{1}{|c|}{$?s$} \\
     \hline
       Alice & Charlie \\
       Alice & David \\
       Bob  & Eric \\ 
   \hline
  \end{tabular} 
\end{center}

\end{example}

\begin{definition}[result of $\clause{ \conselect }$ queries]
  \label{def:gral-sem-query-conselect}
  Given a pattern $P_1$ a graph $R$ and a set of variables
  $S=\{s_1,...,s_n\}$. Let   $\Graph(S)$ be the graph as described in Definition~\ref{def:gral-sem-query-select}.
  consider the query:
   \hsp $Q= \clause{ \conselect }\   S, R \clause{ WHERE } P_1$ 
   and the pattern:
   \hsp $P = P_1 \clause{ BUILD } (\Graph(S) \cup R)$. 
 The \emph{result} of the query $Q$ over a graph $G$, denoted $\Result_{CS}(Q,G)$,
is the pair consisting of the subgraph of $\seg{P}{G}$ image of $R$ by the set of matches
$\sem{P}{G}$ and the multiset of solutions made of the rows of the assignment table
of $\sem{P}{G} (\Graph(S))$ after dropping the column $r$.
\end{definition}

\begin{example}
\label{ex:conselect}
   
   We illustrate here the \conselect\ queries through a toy
   example. The idea is to have a query that both returns a graph and
   a table as result. Typically it may be helpful when one wants to
   query statistical facts about the generated graph. Let us consider
   the database defined in Example~\ref{ex:database}. We propose to ask
   the following query
   which generates a graph representing professors and their
   supervised students accompanied with simple statistics about the
   number of students supervised by each professor.

$$
\begin{array}{lcl}
Q_{CS,\ex} & = & \clause{\conselect } S_{CS,\ex},P_{CS,\ex} \clause{ WHERE } P'_{CS,\ex}\\
S_{CS,\ex} & = & \{(?p, ?nbstudents)\} \\
P_{CS,\ex} & = & \{(?s,supervisedby,?p)\} \\
P'_{CS,\ex} & = & \clause{BASIC }(\{(?p, is, Professor), (?p, teaches, ?c),\\
          &   & \phantom{\clause{BASIC }(\{}(?s, is, Student),(?s, studies, ?c)\}) \\
&   &   \clause{BIND } (\clause{COUNT } (?s \clause{ BY } ?p)\,)
\clause{ AS } ?nbstudents
\end{array}
$$

The result $\Result_{CS}(Q,G)$ of this query is the list of professors
with the number of students they supervise (in our toy database, Alice
has two students, and Bob has one student) together with the graph of
students supervised by a professor. The expected graph and table are
displayed below:

$$  \begin{array}{cc}
          \begin{array}{l}
          \{(David, supervisedby, Alice),\\ 
            \phantom{\{} (Charlie, supervisedby, Alice), \\
            \phantom{\{} (Eric, supervisedby, Bob) \}
          \end{array}
          & 
          \begin{array}{|c|c|}
            \hline
            ?p    & ?nbstudents \\ \hline
            Alice & 2 \\ \hline
            Bob   & 1 \\ \hline
          \end{array}
    \end{array}$$
  \end{example}

\section{A Sound and Complete Calculus}
\label{sec:opsemantics}

In this section we propose a calculus for \emph{solving} patterns and
queries based on a relation over patterns called \emph{\rel}. It
computes values (i.e., sets of solutions)
of patterns 
(Definition~\ref{def:gral-sem-pattern}) and  results of queries
(Definitions~\ref{def:gral-sem-query-construct}, \ref{def:gral-sem-query-select} and~\ref{def:gral-sem-query-conselect}) over any graph.  This calculus is
sound and complete with respect to the set-theoretic semantics given
in Section~\ref{sec:gral}.

In functional and logic programming languages, narrowing~\cite{AntoyEH00} or
resolution~\cite{Lloyd87} derivations are used to solve goals and may have
the following shape where $g_0$ is the initial goal to solve (e.g.,
conjunction of atoms, equations  or a (boolean) term) 
and $g_{n+1}$ is a ``terminal'' goal such as the empty clause,
unifiable equations or the constant \emph{true} :
$$g_0 \rew_{[\sigma_0]} g_1  \rew_{[\sigma_1]} g_2 \ldots g_n  \rew_{[\sigma_n]} g_{n+1}$$
From such a derivation, a solution is obtained by simple composition
of local substitutions $\sigma_n \circ \ldots \sigma_1 \circ \sigma_0$
with restriction to variables of the initial goal $g_0$. In this
paper, $g_0$ is a pattern or a query and the underlying program is not
a set of clauses or rewriting rules but a graph augmented by a set of
rewriting rules defining the behavior of 
two functions $\Solve$ (for patterns) and $\Solve_Q$ (for queries).
An important difference between the setting developed in this
paper and classical functional and logic languages comes from the use
of functional composition ``$\circ$'' in
$\sigma_n \circ \ldots \sigma_1 \circ \sigma_0$.
Depending on the shape of the considered patterns, solutions can be obtained
by using additional composition operators such as $\opn{Join}$ 
(Definitions~\ref{def:algebra-opn} and \ref{def:gral-sem-pattern})
which composes only compatible substitutions computed by different
parts of a derivation (e.g.,
$\opn{Join}(\sigma_k \circ \ldots \sigma_0, \sigma_n \circ \ldots
\sigma_{k+1}$)\,).
In order to have an easy way to handle such kinds of
compositions when developing derivations starting 
from patterns, we introduce below the notion of configuration.
We write $\Pat$ for the sort of patterns.

\begin{definition}[configuration]
  Let $[\_, \_]: \Pat, \Som \to
  \configurations$ be the unique constructor operator of the sort
  $\configurations$. Let $\ulm: L \To G$ be a set of matches
  from graph $L$ to graph $G$ and $P$ a pattern. A
  configuration is denoted using a mixfix notation as a pair 
  $\configdeux{\ulm : L \To G}{P}$
  or simply $\configdeux{\ulm}{P}$.
  An \emph{initial} configuration is a configuration of the form
  $\configdeux{\uli_G: \emptygraph \To G}{P}$
  where $\uli_G =\opn{Match}(\emptygraph,G)$ is the set with one unique element
  that is the inclusion of the empty graph into $G$.
  A \emph{terminal} configuration is a configuration of the form
  $\configdeux{\ulm: L \To G}{\emptypattern}$.
\end{definition}

Roughly speaking, a configuration $\configdeux{\ulm : L \To G}{P}$
represents a state where the considered pattern is $P$, the current
graph database is $G$ which is the target of the current set of
matches $\ulm : L \To G$. 
Finding solutions of a pattern $P$ over a graph $G$
consists in starting from the term
$\Solve (\configdeux{\uli_G: \emptygraph \To G}{P})$
which applies the function $\Solve$ to an initial configuration and
then using appropriate rewriting rules to transform configurations
until reaching a terminal configuration of the form
$\configdeux{\ulm : L \To G'}{\emptypattern}$ where $\ulm : L \To G'$ represents the
expected set of matches (solutions) of $P$ over $G$ and where $G'$ is the
graph obtained after solving the pattern $P$ over $G$. Notice that
graph $G'$ contains $G$ but is not necessarily equal to $G$. 

In Fig.~\ref{fig:rew-patterns}, we provide a rewriting system,
$\syst$, which defines the function $\Solve$. This function is defined
by structural induction on the first component of configurations,
i.e., on the patterns. The second argument of configurations, i.e.,
the sets of matches, in the left-hand sides defining the function
$\Solve$ are always variables of the form $\ulm: L \To G$ or simply
$\ulm$ and thus can be handled easily in the pattern-matching process
of the left-hand sides of the proposed rules (no need to higher-order
pattern-matching nor unification).  In the rules of $\syst$, the
letters $P$, $P_1$ and $P_2$ are variables ranging over
\emph{patterns} (sort \Pat) while variables $L, G$ and $R$ are ranging
over \emph{graphs} (sort \Gr) and $\emptygraph$ is the constant
denoting the empty graph.
Symbol $e$ is a variable of sort $\Exp$ and $x$ is a variable of
subsort $\Var$ while $\ulm$, $\ulm'$ and $\ulp$ are variables of sort
$\Som$. Some constraints of the rules use operations already
introduced in Definition~\ref{def:algebra-opn}, such as $\opn{Match}$,
$\opn{Join}$, $\opn{Bind}$, $\opn{Filter}$, $\opn{Build}$ and
$\opn{Union}$.

\begin{figure}[!th]
  \caption{$\syst$: Rewriting rules for patterns}
  \label{fig:rew-patterns}
\resizebox{\textwidth}{!}{$\displaystyle
 \begin{array}{|llrcl|}
\hline
  &&&& \\
  \;r_0 \;\; & : &
  \Solve\, (\,\conf{\emptypattern}{\ulm:L\To G}\,) &
  \to & 
  \conf{\emptypattern}{\emptysetbis_G:\emptygraph\To G} \\ 
  &&&& \\
  \;r_1 \;\; & : &
  \Solve\, (\,\conf{\clause{BASIC}(L)}{\ulm:L\To G}\,) &
  \to & 
  \conf{\emptypattern}{\ulp:L\To G} \\ 
  &&&& \textrm{ where } \ulp=\opn{Match}(L,G) \\
  &&&& \\
  \;r_2 & : &
  \Solve\, (\,\conf{P_1 \clause{ JOIN } P_2}{\ulm}\,) &
  \to & 
  \Solve_{\it JL}\,(\Solve\, (\,\conf{P_1}{\ulm}\,), P_2) \\
  \;r_3 & : &
  \Solve_{\it JL}\, (\,\conf{\emptypattern}{\ulm},P) &
  \to & 
  \Solve_{\it JR}\,(\ulm, \Solve\,(\,\conf{P}{\ulm}\,) \\ 
  \;r_4 & : &
  \Solve_{\it JR}\,(\ulm, \conf{\emptypattern}{\ulm'}\, ) & 
  \to &
  \conf{\emptypattern}{\ulp} \\
  &&&& \textrm{ where } \ulp =  \opn{Join}(\ulm,\ulm') \\
  &&&& \\
  \;r_5 & : &
  \Solve\,(\,\conf{P \clause{ BIND }\expr \; \clause{AS} \;x}{\ulm}\,) &
  \to & 
  \Solve_{\it BI}\,(\Solve\,(\,\conf{P}{\ulm}\,), \expr, x) \\
  \;r_6 & : &
  \Solve_{\it BI}\,(\,\conf{\emptypattern}{\ulm}, \expr, x) &
  \to & 
  \conf{\emptypattern}{\ulp} \\
  &&&& \textrm{ where } \ulp = \opn{Bind}(\ulm,\expr,x) \\
  &&&& \\
  \;r_7 & : &
  \Solve\,(\,\conf{P \clause{ FILTER }\expr}{\ulm}\,) &
  \to & 
  \Solve_{\it FR}\,(\Solve\,(\,\conf{P}{\ulm}\,), \expr) \\
  \;r_8 & : &
  \Solve_{\it FR}\,(\,\conf{\emptypattern}{\ulm}, \expr) &
  \to & 
  \conf{\emptypattern}{\ulp} \\
  &&&& \textrm{ where } \ulp = \opn{Filter}(\ulm,\expr) \\
  &&&& \\
  \;r_9 & : &
  \Solve\,(\,\conf{P \clause{ BUILD }R}{\ulm}\,) &
  \to & 
  \Solve_{\it BU}(\Solve\,(\,\conf{P}{\ulm}\,), R) \\
  \;r_{10} & : &
  \Solve_{\it BU}\,(\,\conf{\emptypattern}{\ulm}, R) &
  \to & 
  \conf{\emptypattern}{\ulp} \\
  &&&& \textrm{ where } \ulp = \opn{Build}(\ulm,R) \\
  &&&& \\
  \;r_{11} & : &
  \Solve\,(\,\conf{P_1\clause{ UNION }P_2}{\ulm}\,) &
  \to & 
  \Solve_{\it UL}\,(\Solve\,(\,\conf{P_1}{\ulm}\,), P_2) \; \\
  \;r_{12} & : &
  \Solve_{\it UL}\,(\,\conf{\emptypattern}{\ulm}, P) &
  \to & 
  \Solve_{\it UR}\,(\ulm, \Solve\,(\,\conf{P}{\ulm}\,) \\
  \;r_{13} & : &
  \Solve_{\it UR}\,(\ulm,\,\conf{\emptypattern}{\ulm'}) &
  \to & 
  \conf{\emptypattern}{\ulp} \\
  &&&& \textrm{ where } \ulp = \opn{Union}(\ulm,\ulm') \\
  &&&& \\
\hline
 \end{array} $
 }
\end{figure}

\noindent
  In the sequel, we write $\patalgebra (\V)$ for the term algebra
  over the set of variables $\V$ generated by the operations occurring in
  the rewriting system $\syst$.

  Rule $r_0$ considers the degenerated case when one looks for
  solutions of the empty pattern $\emptypattern$. In this case there
  is no solution and the empty set of matches $\emptysetbis_G$ is computed.

  Rule $r_1$ is key in this calculus because it considers
  basic patterns of the form $\clause{BASIC}(L)$ where $L$ is a graph which may
  contain variables. In this case
  $\Solve\, (\,\conf{\clause{BASIC}(L)}{\ulm}\,)$ consists in finding all
  matches from $L$ to $G$. These matches can instantiate variables in $L$.
  Thus, the constraint $\ulp = \opn{Match}(L,G)$ of rule $r_1$ 
  instantiates variables occurring in graph $L$. This
  variable instantiation process is close to the narrowing or the resolution-based
  calculi.

  As said earlier the term
  $\Solve(\configdeux{\uli_G:\emptygraph \To G}{P})$ is intended to find
  both the solutions of pattern $P$ over graph $G$ and the 
  graph $G'$ obtained after transforming
  graph $G$ along the evaluation of the subpatterns of $P$. So,
  the aim of the \rel\ process is to 
  infer all solutions (matches) of a pattern $P$ over a graph
  $G$ starting from the term
  $\Solve(\configdeux{\uli_G:\emptygraph \To G}{P})$. 

In the context of functional-logic programming languages, several
strategies of narrowing-based procedures have been developed to solve
goals including even a needed strategy \cite{AntoyEH00}. In this paper, we
do not need all the power of narrowing procedures because manipulated
data are mostly flat (mainly constants and variables). Thus the
unification process used at every step in the narrowing relation is
beyond our needs. On the other hand, the classical rewriting relation
induced by the above rewriting system is not enough since variables in
patterns $P$ have to be instantiated and such an instantiation cannot
be done by simply rewriting the initial term
$\Solve(\configdeux{\uli_G: \emptygraph \To G}{P})$.

  Consequently, we propose hereafter a new relation
  induced by the above rewriting system that we call \rel. Before the
  definition of this relation, we recall briefly some notations about
  first-order terms. Readers not familiar with such notations may
  consult, e.g.,  \cite{TRSandAllThat}.

\begin{definition}[position, subterm replacement, substitution, $t\!\!\downarrow_{gq}$]
  A \emph{position} is a sequence of positive integers identifying a
  subterm in a term. For a term $t$, the empty sequence, denoted
  $\Lambda$, identifies $t$ itself. When $t$ is of the form
  $g(t_1,\ldots,t_n)$, the position $i.p$ of $t$ with
  $1 \leq i \leq n$ and $p$ is a position in $t_i$, identifies the
  subterm of $t_i$ at position $p$. The subterm of $t$ at position $p$
  is denoted $t|p$ and the result of replacing the subterm of $t$ at
  position $p$ with term $s$ is written $t[s]_p$.  We write
  $t\!\!\downarrow_{gq}$ for the term obtained from $t$ where all
  expressions of $\gqa$-algebra (i.e., operations such as
  $\opn{Join}$, $\opn{Bind}$, $\opn{Filter}$, $\opn{Match}$, etc.) have been evaluated.  A substitution $\sigma$ is a mapping from
  variables to terms. When $\sigma(x) = u$ with $u \not= x$, we say
  that $x$ is in the domain of $\sigma$. We write $\sigma(t)$ to
  denote the extension of the application of $\sigma$ to a term $t$
  which is defined inductively as $\sigma(c) = c$ if $c$ is a constant
  or $c$ is a variable outside the domain of $\sigma$. Otherwise
  $\sigma(f(t_1, \ldots, t_n)) = f(\sigma(t_1), \ldots,
  \sigma(t_n))$.
  \end{definition}

\begin{definition}[\rel\  $\rew$]
  \label{def:rew}
  The rewriting system $\syst$ defines a binary relation $\rew$
  over terms in $\patalgebra (\V)$ that we call \emph{\rel}\
  relation.We write $t \rew_{[u,lhs \to rhs, \sigma]} t'$ or simply
  $t \rew t'$ and say that $t$ is \emph{gq-narrowable}  to $t'$
  iff there exists a rule $lhs \to rhs$ in the rewriting system
  $\syst$, a position $u$ in $t$ and a substitution $\sigma$ such that
  $\sigma(lhs) = t|_u$ and
  $t' = t[\sigma(rhs)\!\!\downarrow_{gq}]_u$. Then $\rewstar$ denotes
  the reflexive and transitive closure of the relation $\rew$.
\end{definition}

Notice that in the definition of term
$t'= t[\sigma(rhs)\!\!\downarrow_{gq}]_u$ above, the substitution
$\sigma$ is not applied to $t$ as in narrowing
($\sigma(t[rhs]_u\downarrow_{gq}) $ but only to the right-hand side
($\sigma(rhs)$). This is mainly due to the possible use of additional
function composition such as $\opn{Join}$ operation.  If we consider
again the rule $r_1$, $t'$ would be of the following shape
$t' = t[\emptypattern,{(\opn{Match}(\sigma(L),G):\sigma(L)\To
  G)\!\!\downarrow_{gq}}]_u$. Notice that, in this case, the
evaluation of $\opn{Match}$ operation instantiates possible variables
occurring in the pattern $BASIC(\sigma(L))$ just like classical
narrowing procedures.

\begin{definition}[\rel\  derivations]
  Let $G$ be a graph, $P$ a pattern and 
  $\ulm$ a set of matches. The evaluation of $P$ over 
  $G$ consists in computing  \rel\ derivations of the form:
    $$ \Solve(\configdeux{\uli_G : \emptygraph \To G}{P}) \rewstar \configdeux{\ulm}{\emptypattern} $$
  \end{definition}

  \begin{example}
  \label{ex:opsem-pattern}
As in Example~\ref{ex:gral-sem-pattern} we consider the pattern:
\begin{center}
\begin{tabular}{lll} 
  $ P_\ex $ & $=$ & 
  $\clause{ BASIC }(L_\ex) \clause{ BUILD }R_\ex$ \\
  & $=$ & $\clause{ BASIC }(\{$ 
($?p$, \mbox{teaches}, $?t$), ($?s$, \mbox{studies}, $?t$) $ \}$) \\
  & & $\clause{ BUILD }\{$
($?p$, \mbox{teaches}, $?z$), ($?s$, \mbox{studies}, $?z$) $ \}$ \\
  \end{tabular}
\end{center}
The expected \rel\ derivation is as follows:
$$ \begin{array}{lll}
  \Solve\,(\,\conf{P_\ex}{\uli_{G_\ex}}\,) &
  \rew_{r_9} &
  \Solve_{\it BU}(\Solve\,(\,\conf{\clause{ BASIC }(L_\ex)}{\uli_{G_\ex}},R_\ex\,) \\
  & \rew_{r_1} &
  \Solve_{\it BU}(\,\conf{\emptypattern}{\Ulm(L_\ex,G_\ex},R_\ex\,) \\
  & \rew_{r_{10}} &
  \conf{\emptypattern}{\opn{Build}(\Ulm(L_\ex,G_\ex),R_\ex)} \\ 
\end{array} $$
According to Example~\ref{ex:gral-sem-pattern} this is the required result.
\end{example}

\begin{example}
  \label{ex:join-count}
  
We consider again Example~\ref{ex:database} and enrich the database
with  a few triples stating membership to a lab for professors and
fixing supervisors of some students.

\begin{center}
\begin{tabular}{llll} 
  $G_A= \{
  $ & (Alice, is, Professor), & (Alice, teaches, Mathematics), & \\
    & (Bob, is, Professor), & (Bob, teaches, Informatics), & \\
    & (Charlie, is, Student), & (Charlie, studies, Mathematics), & \\
    & (David, is, Student), & (David, studies, Mathematics), & \\ 
    & (Eric, is, Student), & (Eric, studies, Informatics), & \\
    & (Alice, member, Lab1), & (Bob, member, Lab2), & \\
    & (David, supervisedby, Alice), & (Eric, supervisedby, Bob) \ \ $ \}$& \\
  \end{tabular}
\end{center}

\noindent
We illustrate below a \rel\ derivation which solves a pattern $\pi_A$
over graph $G_A$. The goal from this pattern is to find students who
are interns in some laboratory (set of matches) and add them to the database
(new database). Let $\pi_A$ be the pattern $ \pi_1 \, \clause{JOIN} \, \pi_2 $ 
where:\\

\noindent
$\pi_1 = (P1  \,\ \clause{BUILD} \,\ R1)$,\\
$\pi_2 =  (P2  \,\ \clause{BUILD} \,\ R2)$,\\
$P1= BASIC(L_1)$, with
$L_1 = \{(?x, supervisedby, ?p), (?p, member , ?l)\})$, \\
$R1= \{(?x, member, ?l)\}$, \\
$P2= BASIC(L_2)$, with 
$L_2 = \{(?x, member, ?t), (?x, is, Student)\}$, and\\
$R2 = \{(?x, is, Intern)\}$.

Solutions for pattern $\pi_A$ over graph $G_A$ is obtained by
performing the following \rel\ derivation in which we omit to specify
the domains and codomains of the involved sets of matches. They are 
defined after the derivation. The evaluation of the subpattern $\pi_1$
transforms graph $G_A$ by adding two additional triples $\{ (David,
member, Lab1), (Eric, member, Lab2)\}$. The evaluation of subpattern
$\pi_2$ uses these triples and adds itself two new triples \\ $\{(David,
is, Intern), (Eric, is, Intern)\}$.

{\small
$$
\begin{array}{l}
Solve(\configdeux{\uli_{G_A}: \emptygraph \To G_A}{\pi_A}) \\
  \begin{array}{lll}
    (1)    & \rew_{r_2} & Solve_{JL}(Solve(\configdeux{\uli_{G_A}}{\pi_1}), \pi_2)\\
    (2)~~~ & \rew_{r_9} &
                          Solve_{JL}(Solve_{BU}(Solve(\configdeux{\uli_{G_A}}{P1}), R1), \pi_2)\\
    (3)    & \rew_{r_1}    & Solve_{JL}(Solve_{BU}( \conf{\emptypattern}{\ulp_1: L_1 \To G_A}, R1), \pi_2)\\
    (4)    & \rew_{r_{10}} & Solve_{JL}( \conf{\emptypattern}{\ulp_2}, \pi_2)\\
    (5)    & \rew_{r3} & Solve_{JR}( \ulp_2, Solve(\configdeux{\ulp_2}{\pi_2}))\\
    (6)    & \rew_{r9} & Solve_{JR}( \ulp_2, Solve_{BU}(Solve(\configdeux{\ulp_2}{P_2}),R2))\\ 
    (7)    & \rew_{r6} & Solve_{JR}( \ulp_2, Solve_{BU}(\configdeux{\ulp_3}{\emptypattern},R2))\\
    (8)    & \rew_{r_{10}} &  Solve_{JR}( \ulp_2, \configdeux{\ulp_4}{\emptypattern})\\
    (9)    & \rew_{4} & \configdeux{\ulp_5}{\emptypattern}
    \end{array}
\end{array}
$$}

\noindent
Where: \\
\noindent
$\ulp_1 = \Ulm(L_1, G_A)$\\
$\ulp_2 =  \opn{Build}(\ulp_1,R1): R1 \To G_B$ with $G_B = G_A \cup
H_{\ulp_1, R1}$\\
$\ulp_3 =  \Ulm(L_2,G_1): L_2 \To G_B$ \\
$\ulp_4 =  \opn{Build}(\ulp_3,R2): R2 \To G_C$ with $G_C = G_B \cup H_{\ulp_3,R2}$\\
$\ulp_5 =  \opn{Join}(\ulp_2, \ulp_4) : R1 \cup R2 \To G_B \cup G_C$
\\

There are two matches from $L_1$ to $G_A$. 
The set $\ulp_1$ of these matches is: 
 $$ \ulp_1 :L_1 \To G_A \;\mbox{ with }\;
   \tbl(\ulp_1) = 
\mbox { \begin{tabular}{|l|l|l|}
     \hline
     \multicolumn{1}{|c|}{$?x$} &
     \multicolumn{1}{|c|}{$?p$} & 
     \multicolumn{1}{|c|}{$?l$} \\
     \hline
       David & Alice & Lab1 \\
       Eric & Bob & Lab2 \\
   \hline
  \end{tabular}
  }
   $$

Let $G_B$ be the intermediate graph, $G_B = G_A \cup
H_{\ulp_1, R1}$, as depicted below.

\begin{center}
\begin{tabular}{llll} 
  $G_{B}= \{
  $ & (Alice, is, Professor), & (Alice, teaches, Mathematics), & \\
    & (Bob, is, Professor), & (Bob, teaches, Informatics), & \\
    & (Charlie, is, Student), & (Charlie, studies, Mathematics), & \\
    & (David, is, Student), & (David, studies, Mathematics), & \\ 
    & (Eric, is, Student), & (Eric, studies, Informatics), & \\
    & (Alice, member, Lab1), & (Bob, member, Lab2), & \\
    & (David, supervisedby, Alice), & (Eric, supervisedby, Bob), & \\
    & (David, member, Lab1), & (Eric, member, Lab2), $\}$ \\
  \end{tabular} 
\end{center}

Then, the set of matches $\ulp_2 : R1 \To G_B$ consists of two
matches.

 $$ \ulp_2 :R1 \To G_B \;\mbox{ with }\;
   \tbl(\ulp_2) = 
\mbox { \begin{tabular}{|l|l|l|}
     \hline
     \multicolumn{1}{|c|}{$?x$} &
     \multicolumn{1}{|c|}{$?l$} \\ 
     \hline
       David & Lab1  \\
       Eric & Lab2  \\
   \hline
  \end{tabular}
  }
   $$

There are two matches from $L_2$ to $G_B$. 
The set $\ulp_3$ of these matches is: 
 $$ \ulp_3 :L_2 \To G_B \;\mbox{ with }\;
   \tbl(\ulp_3) = 
\mbox { \begin{tabular}{|l|l|l|}
     \hline
     \multicolumn{1}{|c|}{$?w$} &
     \multicolumn{1}{|c|}{$?t$} \\
     \hline
       David & Lab1  \\
       Eric & Lab2 \\
   \hline
  \end{tabular}
  }
   $$

Let $G_C$ be the following graph   $G_C = G_B \cup H_{\ulp_3,R2}$:  

\begin{center}
\begin{tabular}{llll} 
  $G_{C}= \{
  $ & (Alice, is, Professor), & (Alice, teaches, Mathematics), & \\
    & (Bob, is, Professor), & (Bob, teaches, Informatics), & \\
    & (Charlie, is, Student), & (Charlie, studies, Mathematics), & \\
    & (David, is, Student), & (David, studies, Mathematics), & \\ 
    & (Eric, is, Student), & (Eric, studies, Informatics), & \\
    & (Alice, member, Lab1), & (Bob, member, Lab2), & \\
    & (David, supervisedby, Alice), & (Eric, supervisedby, Bob), & \\
    & (David, member, Lab1), & (Eric, member, Lab2), \\
    & (David, is, Intern), & (Eric, is, Intern) \ \ $ \}$& \\
  \end{tabular}
\end{center}

The set of matches $\ulp_4: R2 \To G_C$  consists of two
matches.

 $$ \ulp_4 :R2 \To G_C \;\mbox{ with }\;
   \tbl(\ulp_4) = 
\mbox { \begin{tabular}{|l|l|l|}
     \hline
         \multicolumn{1}{|c|}{$?w$} \\ 
     \hline
       David \\
       Eric  \\
   \hline
  \end{tabular}
  }
   $$

   Finally, $\ulp_5 =  \opn{Join}(\ulp_2, \ulp_4) : R1 \cup R2 \To
   G_B \cup G_C$ consists of two matches

    $$ \ulp_5 : R1 \cup R2 \To
   G_B \cup G_C \;\mbox{ with }\;
   \tbl(\ulp_5) = 
\mbox { \begin{tabular}{|l|l|l|}
     \hline
     \multicolumn{1}{|c|}{$?x$} &
     \multicolumn{1}{|c|}{$?l$} \\
     \hline
          David & Lab1   \\
          Eric & Lab2   \\
   \hline
  \end{tabular}
  }
   $$

\end{example}

\begin{theorem}[soundness]
  \label{th:soundness}
  Let $G$ be a graph, $P$ a pattern and $\ulm$ a set of matches such that
  $ \Solve(\configdeux{\uli_G}{P})
  \rewstar \configdeux{\ulm}{\emptypattern} $. Then for all morphisms
  $m$ in $\ulm$, there exists a morphism $m'$ equals to $m$ up to
  renaming of variables such that $m'$ is in $\sem{P}{G}$.
\end{theorem}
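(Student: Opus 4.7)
The plan is to proceed by structural induction on the pattern $P$, strengthening the statement so that the starting configuration is arbitrary: for every configuration $\conf{\ulm_0 : L_0 \To G_0}{P}$ and every maximal \rel\ derivation $\Solve(\conf{\ulm_0}{P}) \rewstar \conf{\ulp}{\emptypattern}$, the resulting set $\ulp$ coincides with $\sem{P}{G_0}$ up to variable renaming. This strengthening is necessary because the inner subderivations triggered by rules $r_2$, $r_5$, $r_7$, $r_9$ and $r_{11}$ do not restart from an initial configuration $\uli_{G'}$ but from the set of matches produced by a previously evaluated subpattern. Inspection of the left-hand sides of the rules of $\syst$ confirms that only the target graph of the current set of matches affects the outcome of $\Solve$, so this stronger formulation is faithful to the calculus.

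For the base cases, $P = \emptypattern$ reduces in one step by $r_0$ to $\conf{\emptypattern}{\emptysetbis_{G_0}}$, which is exactly $\sem{\emptypattern}{G_0}$, and $P = \clause{BASIC}(L)$ reduces in one step by $r_1$ to $\conf{\emptypattern}{\opn{Match}(L, G_0)}$, which equals $\sem{\clause{BASIC}(L)}{G_0}$. For each inductive case $P = P_1 \clause{ OP } \param$, the shape of the rules forces any maximal derivation into a fixed skeleton: first reduce $\Solve(\conf{\ulm_0}{P_1})$ to $\conf{\ulp_1}{\emptypattern}$, then (for \clause{JOIN} and \clause{UNION}) reduce $\Solve(\conf{\ulp_1}{P_2})$ to $\conf{\ulp_2}{\emptypattern}$, and finally apply the corresponding $\gqa$ operation of Definition~\ref{def:algebra-opn}. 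The induction hypothesis identifies $\ulp_1$ with $\sem{P_1}{G_0}$ and, where applicable, $\ulp_2$ with $\sem{P_2}{\seg{P_1}{G_0}}$, both up to renaming; a clause-by-clause comparison with Definition~\ref{def:gral-sem-pattern} then yields that $\ulp$ agrees with $\sem{P}{G_0}$ up to renaming.

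The main obstacle is the \clause{BUILD} case, which is also the sole source of the ``up to variable renaming'' caveat. Rule $r_{10}$ produces $\opn{Build}(\ulp_1, R)$ which, by Definition~\ref{def:algebra-build}, introduces fresh variables $\var(m, x)$ for each $x \in \V(R) - \V(\se{P_1})$ and each $m \in \ulp_1$, specified only up to the choice of these fresh names. I would therefore strengthen the induction hypothesis further so that it provides an explicit bijective renaming $\phi$ between the variables freshly introduced along the derivation and those appearing in $\sem{P}{G_0}$, and check that each rewriting step preserves such a witness: trivially for rules that do not introduce fresh variables, and via the defining property of $\opn{Build}$ for $r_{10}$, combined with the compatibility of $\opn{Join}$, $\opn{Union}$, $\opn{Bind}$ and $\opn{Filter}$ with variable renaming, since these operations are specified purely in terms of the assignment table and act functorially on renamings of the source and target graphs. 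The theorem then follows by specialising the strengthened result to $\ulm_0 = \uli_G$: for each $m \in \ulm$, the induced renaming yields a morphism $m' \in \sem{P}{G}$ that equals $m$ up to renaming of variables.
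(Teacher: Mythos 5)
Your proposal is correct and follows essentially the same route as the paper: a case analysis on the shape of $P$ in which the rules of $\syst$ force a fixed derivation skeleton, the induction hypothesis is applied to the subderivation(s) for $P_1$ (and $P_2$), and the final step is matched clause by clause against the operations of Definition~\ref{def:algebra-opn} and the semantics of Definition~\ref{def:gral-sem-pattern}. The paper organises this as an induction on the length of the derivation rather than structurally on $P$, but the resulting case split is identical. Your two refinements are improvements rather than divergences: the paper's inner subderivations likewise start from the configuration $\conf{P_2}{\ulm_1}$ produced by the first subpattern rather than from an initial configuration, so your generalisation to arbitrary starting sets of matches (justified by observing that only the target graph of the current set of matches influences $\Solve$) is in fact what makes the induction hypothesis applicable there; and your explicit bijective renaming witness for the fresh variables introduced by $\opn{Build}$ makes precise the ``up to variable renaming'' bookkeeping that the paper leaves implicit. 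Note also that you prove the stronger two-sided statement (the computed set coincides with $\sem{P}{G}$ up to renaming), of which the theorem's soundness inclusion is a corollary; this is fine given determinism and termination of $\rew$, and it anticipates the completeness theorem.
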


\begin{proof}
  The proof is done by induction on the length $n$ of derivation \\
   $ Solve(\config{\uli_{G}}{P}{G})
   \rew^{n}(\config{\ulm}{\emptypattern}{G'}) $.

   \emph{Base case}. $n=1$. In this case only rules $r_0$ or $r_1$ are
   possible.
   \begin{description}
     \item Case of rule $r_0$. In this case, the pattern $P$ is the
       empty pattern $\emptypattern$ and $\ulm = \emptysetbis_G$. The case
       vacuously holds.
     \item Case of rule $r_1$. In this case, the pattern $P$ is  of
       the form $\clause{BASIC}(L)$ and the considered derivation is of the form
    $ Solve (\config{\uli_{G}}{\clause{BASIC}(L)}{G}) \rew
             \config{\ulm : L \To G} {\emptypattern} {G_2}$ where
             $\ulm =  \Ulm(L,G)$. The claim obviously holds since
             $\sem{\clause{BASIC}(L)}{G} = \Ulm(L,G):L\To G$ by
             Definition~\ref{def:gral-sem-pattern}.
                   \end{description}

     \emph{Induction Step}. $n > 1$. In this case the pattern $P$ can
     be of five different shapes as discussed below:
     \begin{description}
     \item [$P = \pi_1 \clause{ JOIN } \pi_2$.] Then, the \rel\ 
       derivation  $ Solve(\config{\uli_{G}}{P}{G}) $ $
   \rew^{n}(\config{\ulm}{\emptypattern}{G'}) $ has the following
   shape

   $ Solve(\config{\uli_{G}}{\pi_1 \clause{ JOIN } \pi_2}{G}) $ $ \rew 
   Solve_{JL}(Solve(\config{\uli_{G}}{\pi_1}{G}), \pi_2) $ $
   \rew^{n_{1}} \\
   Solve_{JL}(\config{\ulm_1}{\emptypattern}{G_1}), \pi_2) $ $ \rew
   Solve_{JR}(\ulm_1, Solve(\config{\ulm_1}{\pi_2}{G_1})) $ $
   \rew^{n_{2}} \\
   Solve_{JR}(\ulm_1, \config{\ulm_2: Z \To G'}{\emptypattern}{G'}) $
   $ \rew
   \config{\ulm_3}{\emptypattern}{G'}$ with
   $\ulm_3 = \opn{Join}(\ulm_1,\ulm_2)$ and $Z$ a graph.

   Notice that the length $n$ equals $n_1+n_2+3$. By induction
   hypothesis, the set of matches $\ulm_1$ and $\ulm_2$ are
   sound. Then, the set $\ulm_3$ is obtained by
   using the operation $\opn{Join}$ over the sets $\ulm_1$ and
   $\ulm_2$ which ensures the soundness of the set $\ulm_3$.

 \item [$P = \pi_1\clause{ BIND }\expr_1\clause{ AS } x_1$.] Then, the \rel
\    derivation  $ Solve(\config{\uli_{G}}{P}{G}) $ $
   \rew^{n}(\config{\ulm}{\emptypattern}{G'}) $ has the following
   shape:

   $ Solve(\config{\uli_{G}}{\pi_1\clause{ BIND }\expr_1\clause{ AS }
     x_1}{G}) $ $\rew 
   Solve_{BD}(Solve(\config{\uli_{G}}{\pi_1}{G}), \expr_1, x_1) $ $
   \rew^{n_{1}} \\
   Solve_{BD}(\config{\ulm_1: Z \To G_1}{\emptypattern}{G_1}),
   \expr_1, x_1) $ $ \rew
   \config{\ulm_2}{\emptypattern}{G'}$ with
    $G' =  G_1 \cup \ulm(\expr_1)$ and $\ulm_2 = \opn{Bind}(\ulm_1,\expr_1,x_1) : Z \cup \{x_1\} \To   G'$

    Notice that the length $n$ equals $n_1+2$. Thus $n_1$ is less than
    $n$ and by induction hypothesis, the set of matches $\ulm_1$ is
    sound. Then, the set $\ulm_2$ is obtained as
    $\opn{Bind}(\ulm_1,\ev(\ulm_1,\expr_1),x_1)$ as expected by the
    semantics which ensures the soundness of the set $\ulm_2$.

  \item [$P = \pi_1 \clause{ FILTER }\expr_1 $.]
    Then, the \rel\ derivation  $ Solve(\config{\uli_{G}}{P}{G}) $ $
   \rew^{n}(\config{\ulm}{\emptypattern}{G'}) $ has the following
   shape:

   $ Solve(\config{\uli_{G}}{\pi_1\clause{ FILTER }\expr_1}{G}) $ $ \rew 
   Solve_{FR}(Solve(\config{\uli_{G}}{\pi_1}{G}), \expr_1) $ $
   \rew^{n_{1}} \\
   Solve_{FR}(\config{\ulm_1: Z \To G'}{\emptypattern}{G'}),
   \expr_1) $ $ \rew
   \config{\ulm_2}{\emptypattern}{G'}$ with
    $\ulm_2 = \opn{Filter}(\ulm_1,\expr_1) : Z  \To   G'$

    Notice that the length $n$ equals $n_1+2$. Thus $n_1$ is less than
    $n$ and by induction hypothesis, the set of matches $\ulm_1$ is
    sound. Then, the set $\ulm_2$ is obtained as
   $\opn{Filter}(\ulm_1,\ev(\ulm_1,\expr_1))$ as expected by the
    semantics which ensures the soundness of the set $\ulm_2$.

  \item [$P = \pi_1 \clause{ BUILD }R $.]
    Then, the \rel\ derivation  $ Solve(\config{\uli_{G}}{P}{G}) $ $
   \rew^{n}(\config{\ulm}{\emptypattern}{G'}) $ has the following
   shape:

   $ Solve(\config{\uli_{G}}{\pi_1\clause{ BUILD } R_1}{G})$ $ \rew$ $
   Solve_{BU}(Solve(\config{\uli_{G}}{\pi_1}{G}), R_1) $
   $\rew^{n_{1}}$ $ \\
   Solve_{BU}(\config{\ulm_1: Z \To G_1}{\emptypattern}{G_1}), R_1)
   $ $\rew $ $\config{\ulm_2}{\emptypattern}{G'}$ with
   $G' = G_1 \cup \ulm_1^{\sharp}(R_1)$ and
   $\ulm_2 = \opn{Build}(\ulm_1,R_1) : Z \To G'$.

    Notice that the length $n$ equals $n_1+2$. Thus $n_1$ is less than
    $n$ and by  induction hypothesis, the set of matches $\ulm_1$ is
    sound. Then, the set $\ulm_2$ is obtained as
    $\opn{Build}(\ulm_1,R_1)$  as expected by the
    semantics which ensures the soundness of the set $\ulm_2$.

     \item [$P = \pi_1\clause{ UNION }\pi_2 $.] Then, the \rel\
       derivation  $ Solve(\config{\uli_{G}}{P}{G}) $ $
   \rew^{n}(\config{\ulm}{\emptypattern}{G'}) $ has the following
   shape

   $ Solve(\config{\uli_{G}}{\pi_1 \clause{ UNION } \pi_2}{G}) \rew 
   Solve_{UL}(Solve(\config{\uli_{G}}{\pi_1}{G}), \pi_2) $
   $\rew^{n_{1}} \\
   Solve_{UL}(\config{\ulm_1}{\emptypattern}{G_1}), \pi_2) $ $\rew
   Solve_{UR}(\ulm_1, Solve(\config{\ulm_1}{\pi_2}{G_1})) $ $
   \rew^{n_{2}} \\
   Solve_{UR}(\ulm_1, \config{\ulm_2: Z \To G'}{\emptypattern}{G'}) $ $ \rew
   \config{\ulm_3}{\emptypattern}{G'}$ with
   $\ulm_3 = \opn{Union}(\ulm_1,\ulm_2) : Z \To G'$ and $Z$ a graph.

   Notice that the length $n$ equals $n_1+n_2+3$. By induction
   hypothesis, the set of matches $\ulm_1$ and $\ulm_2$ are
   sound. Then, the set $\ulm_3$ is obtained by using the operation
   $\opn{Union}$ over the sets $\ulm_1$ and $\ulm_2$ as expected by
   the semantics which ensures the soundness of the set $\ulm_3$.
         
     \end{description}
  \end{proof}

\begin{theorem}[completeness]
  \label{th:completeness}
  Let $G_1, G_2$ and $X$ be graphs, $P$ a pattern and $h: X \To G_2$ a
  match in $\sem{P}{G_1}$. Then there exist graphs $G'_2$ and $X'$, a
  set of matches $\ulm : X' \To G'_2$, a derivation
  $ \Solve(\config{\uli_{G_1}}{P}{G_1} \rewstar
  \config{\ulm}{\emptypattern}{G'_2} $ and a match $m : X' \To G'_2$ in
  $\ulm$ such that $m$ and $h$ are equal up to variable renaming.
\end{theorem}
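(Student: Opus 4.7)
The plan is a structural induction on the pattern $P$, dual to the derivation-length induction used for Theorem~\ref{th:soundness}. A useful preliminary observation is that none of the rules $r_0$--$r_{13}$ actually inspects the set-of-matches component stored at the root of the configuration that a $\Solve$ application consumes: in rules $r_0$ and $r_1$ this argument is discarded, and in every other rule it is merely threaded through until the recursive $\Solve$ call completes. Hence the value produced by $\Solve(\config{\ulm_0}{P}{G_1})$ depends only on $P$ and on the target graph $G_1$. I therefore prove, by structural induction on $P$, the slightly stronger statement: for every starting set of matches $\ulm_0$, every graph $G_1$ and every $h\in\sem{P}{G_1}$, there is a derivation $\Solve(\config{\ulm_0}{P}{G_1})\rewstar \config{\ulm}{\emptypattern}{G'_2}$ together with some $m\in\ulm$ equal to $h$ up to variable renaming. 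The theorem is the special case $\ulm_0=\uli_{G_1}$.

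\textbf{Base cases.} If $P=\emptypattern$, rule $r_0$ produces $\config{\emptysetbis_{G_1}}{\emptypattern}{G_1}$, and $\sem{\emptypattern}{G_1}=\emptysetbis_{G_1}$ has no match, so the claim holds vacuously. If $P=\clause{BASIC}(L)$, rule $r_1$ produces $\config{\opn{Match}(L,G_1)}{\emptypattern}{G_1}$, which by Definition~\ref{def:gral-sem-pattern} is precisely $\sem{P}{G_1}$; in particular $h$ itself belongs to $\ulm$.

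\textbf{Inductive step.} Unfold Definition~\ref{def:gral-sem-pattern} for the outermost operator of $P$ to decompose $h$ in terms of solutions of its subpattern(s). For $P=P_1\clause{ JOIN }P_2$, write $h=m_1\bowtie m_2$ with compatible $m_1\in\sem{P_1}{G_1}$ and $m_2\in\sem{P_2}{\seg{P_1}{G_1}}$. Apply the induction hypothesis to $P_1$ over $G_1$, with starting set $\ulm_0$, to get a derivation ending in $\config{\ulm_1}{\emptypattern}{G'_1}$ containing a renaming $m'_1$ of $m_1$. Then apply the induction hypothesis to $P_2$ over $G'_1$, with starting set $\ulm_1$ (legitimate thanks to the preliminary observation), to get a derivation ending in $\config{\ulm_2}{\emptypattern}{G'_2}$ containing a renaming $m'_2$ of $m_2$. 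Stitching these two sub-derivations together with rules $r_2$, $r_3$ and closing with $r_4$ yields a derivation terminating in $\config{\opn{Join}(\ulm_1,\ulm_2)}{\emptypattern}{G'_2}$, which contains $m'_1\bowtie m'_2$, a renaming of $h$. The other four compound cases are entirely analogous: rules $r_5$--$r_6$ handle $\clause{BIND}$, $r_7$--$r_8$ handle $\clause{FILTER}$, $r_9$--$r_{10}$ handle $\clause{BUILD}$, and $r_{11}$--$r_{13}$ handle $\clause{UNION}$. In each case, the closing rule applies exactly the algebra operation ($\opn{Bind}$, $\opn{Filter}$, $\opn{Build}$, $\opn{Union}$) prescribed by Definition~\ref{def:gral-sem-pattern}, so the final set of matches agrees with the declarative semantics up to renaming.

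\textbf{Main obstacle.} The only step requiring genuine care is $\clause{BUILD}$. By Definition~\ref{def:algebra-build}, $\opn{Build}(m,R)$ introduces a \emph{fresh} variable $\var(m,x)$ for every $x\in\V(R)-\V(L)$, so two independent evaluations of the same $\clause{BUILD}$ pattern can only coincide up to renaming of those freshly introduced variables. This is precisely why the conclusion is stated ``up to variable renaming''. The renaming must be chosen coherently when a $\clause{BUILD}$ appears inside the left argument of a $\clause{JOIN}$ or $\clause{UNION}$, because the right-hand subpattern is then evaluated on a target graph that already contains the freshly introduced variables. Since the compatibility relation $\sim$ and the merging operation $\bowtie$ are preserved under simultaneous renaming of the variables involved, a single coherent renaming can be propagated through the whole derivation, which closes the induction.
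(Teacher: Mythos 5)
Your proof follows essentially the same route as the paper's: structural induction on the pattern, with the same case analysis driven by the rules $r_0$--$r_{13}$ and the corresponding clauses of Definition~\ref{def:gral-sem-pattern}. Your strengthening of the induction hypothesis to an arbitrary starting set of matches (justified by the observation that the rules use only its target graph) is a welcome refinement that makes the application of the induction hypothesis in the $\clause{JOIN}$ and $\clause{UNION}$ cases --- where the second subpattern is solved starting from $\ulm_1$ rather than from an initial configuration --- fully rigorous, a point the paper's own proof glosses over.
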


\begin{proof}
  The proof is done by structural induction over patterns.
  
  \emph{Base case}.
  $P = \emptypattern$. In this case, $\sem{P}{G_1}$ is empty and thus
  the statement vacuously holds. 

  \emph{Induction step}. There are six cases to consider according to
  the shape of pattern $P$.
  \begin{description}
     \item [$P = \clause{BASIC}(L)$.] In this case the set of
       matches $\sem{P}{G_1}$ coincides with the set $\ulm$ obtained
       after one step \rel\ 
       $ Solve (\config{\uli_{G_1}}{\clause{BASIC}(L)}{G_1}) \rew_{r_1}
             \config{\ulm : L \To G_1} {\emptypattern} {G}$ where
             $\ulm =  \Ulm(L,G_1)$. Obviously, $\ulm = \sem{P}{G}$.
       
           \item [$P = \pi_1 \clause{ JOIN } \pi_2$.] By
             Definition~\ref{def:gral-sem-pattern},
             $\sem{\pi_1\clause{ JOIN }\pi_2}{G_1} =
             \opn{Join}(\sem{\pi_1}{G_1},\sem{\pi_2}{\seg{\pi_1}{G_1}})$. Thus,
             $h$ being an element of $\sem{\pi_1\clause{ JOIN }\pi_2}{G_1}$
             there exist two matches $h_1$ in $\sem{\pi_1}{G_1}$ and
             $h_2$ in $\sem{\pi_2}{G_1^{(\pi_1)}}$ such that
             $h = h_1 \bowtie h_2$.  Let us consider the following
             \rel\ step
             $ Solve(\config{\uli_{G_1}}{\pi_1 \clause{ JOIN } \pi_2}{G_1}) \rew
             Solve_{JL}(Solve(\config{\uli_{G_1}}{\pi_1}{G_1}), \pi_2)$.  By
             induction hypothesis, there exists a derivation
             $Solve(\config{\uli_{G_1}}{\pi_1}{G_1}) \rewstar
             \config{\ulm_1}{\emptypattern}{G'_1}$ and a match
             $m_1$ in $\ulm_1$ such that $h_1$ and $m_1$ are equal up to
             variable renaming. Notice that $G'_1$ is isomorphic to
             $G^{({\pi_1})}$. Now we can develop further the above
             derivation and get
             $ Solve(\config{\uli_{G_1}}{\pi_1 \clause{ UNION } \pi_2}{G_1}) \rew
             Solve_{UL}(Solve(\config{\uli_{G_1}}{\pi_1}{G_1}), \pi_2)$ $ \rewstar
             Solve_{JL}(\config{\ulm_1}{\emptypattern}{G'_1}), \pi_2) \rew
             Solve_{JR}(\ulm_1, Solve(\config{\ulm_1}{\pi_2}{G'_1}))$.
             Again, by induction hypothesis, there exists a derivation
             $Solve(\config{\ulm_1}{\pi_2}{G'_1}) \rewstar
             \config{\ulm_2: Z \To G_2}{\emptypattern}{G_2}$ and a
             match $m_2$ in $\ulm_2$ such that $h_2$ and $m_2$ are equal up
             to variable renaming.

             Finally, we get the expected derivation

             $ Solve(\config{\uli_{G_1}}{\pi_1 \clause{ JOIN } \pi_2}{G_1})
             \rew Solve_{JL}(Solve(\config{\uli_{G_1}}{\pi_1}{G_1}),
             \pi_2)$
             $ \rewstar \\
             Solve_{JL}(\config{\ulm_1}{\emptypattern}{G'_1}), \pi_2)
             \rew Solve_{JR}(\ulm_1,
             Solve(\config{\ulm_1}{\pi_2}{G'_1})) \rewstar
             Solve_{JR}(\ulm_1, \config{\ulm_2: Z \To
               G_2}{\emptypattern}{G_2}) \rew
             \config{\ulm_3}{\emptypattern}{G_3}$ with
             $\ulm_3 = \opn{Join}(\ulm_1,\ulm_2)$. Therefore, by
             definition of $\ulm_3$, we have $m = m_1 \bowtie m_2$ is
             in $\ulm_3$ and $m$ is thus equal to $h$ up to variable
             renaming.

           \item [$P = \pi_1\clause{ BIND }\expr_1\clause{ AS } x_1$.]
             By Definition~\ref{def:gral-sem-pattern},
             $\sem{\pi_1\clause{ BIND }\expr_1\clause{ AS } x_1}{G_1}
             =
             \opn{Bind}(\sem{P_1}{G_1},\expr_1,x_1)
             : \se{P_1}\cup\{x_1\}\To
             \seg{P_1}{G_1}\cup{\sem{P_1}{G_1}(\expr_1)} $. Thus, $h$
             is an element of
             $
             \opn{Bind}(\sem{P_1}{G_1},\expr_1 ,x_1)$;
             Now, let us consider the following \rel\ step \\
             $ Solve(\config{\uli_{G_1}}{\pi_1\clause{ BIND
               }\expr_1\clause{ AS } x_1}{G_1}) \rew
             Solve_{BD}(Solve(\config{\uli_{G_1}}{\pi_1}{G_1}), \expr_1, x_1)$.
             By induction hypothesis, there exists a derivation
             $Solve(\config{\uli_{G_1}}{\pi_1}{G_1}) \rewstar
             \config{\ulm_1}{\emptypattern}{G'_1}$ such that for all
             matches $h_1$ in  $\sem{P_1}{G_1}$, there exists a match
             $m_1$ in $\ulm_1$ equal to $h_1$ up to variable
             renaming. Therefore, for every element $h$ in
             $\opn{Bind}(\sem{P_1}{G_1},\expr_1,x_1)$
              there exists a match $m$ in
              $\opn{Bind}(\ulm_1,\expr_1,x_1)$ with $m$
              and $h$ equal up to variable renaming.

            \item [$P = \pi_1 \clause{ FILTER }\expr_1 $.]
                           By Definition~\ref{def:gral-sem-pattern},
             $\sem{\pi_1\clause{ FILTER }\expr_1}{G_1}
             =
             \opn{Filter}(\sem{P_1}{G_1},\expr_1)
             : \se{P_1} \To
             \seg{P_1}{G_1} $. Thus, $h$
             is an element of
             $
             \opn{Filter}(\sem{P_1}{G_1},\expr_1)$;
             Now, let us consider the following \rel\ step
             $ Solve(\config{\uli_{G_1}}{\pi_1\clause{ FILTER
               }\expr_1}{G_1}) \rew
             Solve_{FR}(Solve(\config{\uli_{G_1}}{\pi_1}{G_1}), \expr_1)$.
             By induction hypothesis, there exists a derivation
             $Solve(\config{\uli_{G_1}}{\pi_1}{G_1}) \rewstar
             \config{\ulm_1}{\emptypattern}{G'_1}$ such that for all
             matches $h_1$ in  $\sem{P_1}{G_1}$, there exists a match
             $m_1$ in $\ulm_1$ equal to $h_1$ up to variable
             renaming. Therefore, for every element $h$ in
             $\opn{Filter}(\sem{P_1}{G_1},\expr_1)$
              there exists a match $m$ in
              $\opn{Filter}(\ulm_1,\expr_1)$ with $m$
              and $h$ equal up to variable renaming.

            \item [$P = \pi_1 \clause{ BUILD } R_1 $.]
                              By Definition~\ref{def:gral-sem-pattern},
             $\sem{\pi_1\clause{ BUILD } R_1}{G_1}
             =
             \opn{Build}(\sem{P_1}{G_1},R_1)
              : R_1\To \seg{P_1}{G_1} \cup \sem{P_1}{G_1}(R_1)$. Thus, $h$
             is an element of
             $
             \opn{Build}(\sem{P_1}{G_1}, R_1)$;
             Now, let us consider the following \rel\ step \\
             $ Solve(\config{\uli_{G_1}}{\pi_1\clause{ BUILD
               }R_1}{G_1}) \rew
             Solve_{BU}(Solve(\config{\uli_{G_1}}{\pi_1}{G_1}), R_1)$.
             By induction hypothesis, there exists a derivation
             $Solve(\config{\uli_{G_1}}{\pi_1}{G_1}) \rewstar
             \config{\ulm_1}{\emptypattern}{G'_1}$ such that for all
             matches $h_1$ in  $\sem{P_1}{G_1}$, there exists a match
             $m_1$ in $\ulm_1$ equal to $h_1$ up to variable
             renaming. Therefore, for every element $h$ in
             $\opn{Build}(\sem{P_1}{G_1}, R_1)$
              there exists a match $m$ in
              $\opn{Build}(\ulm_1, R_1)$ with $m$
              and $h$ equal up to variable renaming.

            \item [$P = \pi_1\clause{ UNION }\pi_2 $.]  By
              Definition~\ref{def:gral-sem-pattern},
              $\sem{\pi_1\clause{ UNION }\pi_2}{G_1} =
              \opn{Union}(\sem{\pi_1}{G_1},\sem{\pi_2}{\seg{\pi_1}{G_1}})$. Thus,
              $h$ being an element of
              $\sem{\pi_1\clause{ UNION }\pi_2}{G_1}$ either $h$ is an
              extension of a match $h_1$ in $\sem{\pi_1}{G_1}$ or a
              match $h_2$ in $\sem{\pi_2}{G_1^{(\pi_1)}}$.

             Let us consider the following \rel\ step
             $ Solve(\config{\uli_{G_1}}{\pi_1 \clause{ UNION }
               \pi_2}{G_1}) \rew
             Solve_{UL}(Solve(\config{\uli_{G_1}}{\pi_1}{G_1}), \pi_2)$.
             By induction hypothesis, there exists a derivation
             $Solve(\config{\uli_{G_1}}{\pi_1}{G_1}) \rewstar
             \config{\ulm_1}{\emptypattern}{G'_1}$.  If $h$ is an
             extension of $h_1$ element of $\sem{\pi_1}{G_1}$ then, by
             induction hypothesis, there exists a match $m_1$ in
             $\ulm_1$ such that $h_1$ and $m_1$ are equal up to
             variable renaming and $G'_1$ is isomorphic to
             $G^{({\pi_1})}$. Now we can develop further the above
             derivation and get
             $ Solve(\config{\uli_{G_1}}{\pi_1 \clause{ UNION } \pi_2}{G_1})
             \rew Solve_{UL}(Solve(\config{\uli_{G_1}}{\pi_1}{G_1}),
             \pi_2)$
             $ \rewstar
             Solve_{UL}(\config{\ulm_1}{\emptypattern}{G'_1}), \pi_2)
             \rew \\ Solve_{UR}(\ulm_1,
             Solve(\config{\ulm_1}{\pi_2}{G'_1}))$.  Again, by
             induction hypothesis, there exists a derivation
             $Solve(\config{\ulm_1}{\pi_2}{G'_1}) \rewstar
             \config{\ulm_2: Z \To G_2}{\emptypattern}{G_2}$ .
             If $h$ is an
             extension of $h_2$ element of $\sem{\pi_2}{G'_1}$ then, by
             induction hypothesis, there exists a match $m_2$ in
             $\ulm_2$ such that $h_2$ and $m_2$ are equal up to
             variable renaming and $G_2$ is isomorphic to
             ${G_1^{({\pi_1})}}^{(\pi_2)}$.
             Finally, we get the expected derivation

             $ Solve(\config{\uli_{G_1}}{\pi_1 \clause{ UNION } \pi_2}{G_1})
             \rew Solve_{UL}(Solve(\config{\uli_{G_1}}{\pi_1}{G_1}),
             \pi_2)$
             $ \rewstar \\
             Solve_{UL}(\config{\ulm_1}{\emptypattern}{G'_1}), \pi_2)
             \rew Solve_{UR}(\ulm_1,
             Solve(\config{\ulm_1}{\pi_2}{G'_1})) \rewstar
             Solve_{UR}(\ulm_1, \config{\ulm_2: Z \To
               G_2}{\emptypattern}{G_2}) \rew
             \config{\ulm_3}{\emptypattern}{G_3}$ with
             $\ulm_3 = \opn{Union}(\ulm_1,\ulm_2)$. Therefore, by
             definition of $\ulm_3$, we have $m$ can be either an
             extension of $m_1$ or $ m_2$ 
             in $\ulm_3$ and thus $m$ is equal to $h$ up to variable
             renaming.
   
  \end{description}
  \end{proof}

All \rel\ derivation steps for solving patterns are needed 
since at each step only one position is candidate to a \rel\ step.

\begin{proposition}[determinism] 
  \label{prop:deterministic}
  Let $ t_0 \rew t_1 \rew \ldots \rew t_n$ be a \rel\
  derivation with $t_0 = \Solve(\configdeux{\uli_G}{P})$.  For all
  $i \in [0..n]$, there exists at most one position $u_i$ in $t_i$
  such that $t_i$ can be gq-narrowed into $t_{i+1}$.
  \end{proposition}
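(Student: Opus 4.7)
The plan is to prove determinism by induction on $i$, establishing a structural invariant on each term $t_i$ that directly singles out a unique gq-narrowable position. I would begin by inspecting Fig.~\ref{fig:rew-patterns}: every left-hand side has the form $\Solve(\conf{Q}{\ulm})$ for some pattern head $Q$, or $\Solve_X(\conf{\emptypattern}{\ulm},\ldots)$ with $X \in \{\mathit{JL},\mathit{JR},\mathit{BI},\mathit{FR},\mathit{BU},\mathit{UL},\mathit{UR}\}$, and every right-hand side is either a terminal configuration $\conf{\emptypattern}{\ulp}$ or a term whose outermost symbol is some $\Solve_X$ whose active argument is a nested $\Solve(\conf{Q'}{\ulm'})$ with $Q'$ a pattern head different from $\emptypattern$.

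Based on this, I would propose the invariant that $t_i$ always has the shape of a spine
\[
\Solve_{X_1}\bigl(\Solve_{X_2}(\cdots\Solve_{X_k}(s,\alpha_k)\cdots,\alpha_2),\alpha_1\bigr)
\]
with $k \ge 0$, where $s$ is either $\Solve(\conf{P'}{\ulm})$ with $P' \ne \emptypattern$, or a terminal configuration $\conf{\emptypattern}{\ulp}$, and each $\alpha_j$ is a pattern, expression, variable, graph or bare set of matches. When $s$ is a $\Solve$-term the unique redex is $s$; when $s$ is terminal and $k \ge 1$ the unique redex is the outer $\Solve_{X_1}$, whose active slot is now a terminal configuration; if $k = 0$ and $s$ is terminal, the derivation has ended and no redex exists. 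The base case $t_0 = \Solve(\conf{P}{\uli_G})$ is the first form with $k = 0$. The inductive step reduces to a case analysis on the fourteen rules, each of which transforms a spine into another spine: rules $r_0, r_1$ convert the innermost $\Solve$ into a terminal configuration, thereby promoting the next outer $\Solve_X$ (if any) to be the unique redex; rules $r_2, r_5, r_7, r_9, r_{11}$ wrap the innermost $\Solve$ inside a new $\Solve_X$ with a fresh inner $\Solve$; rules $r_3, r_{12}$ replace an outer $\Solve_X$ with terminal slot by another outer $\Solve_X$ whose new active slot is a fresh inner $\Solve$; and rules $r_4, r_6, r_8, r_{10}, r_{13}$ collapse one spine level by producing a terminal configuration from an outer $\Solve_X$.

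The main technical obstacle is ruling out hidden redexes inside the parameters $\alpha_j$, and inside the bare first argument $\ulm$ of $\Solve_{\mathit{JR}}$ and $\Solve_{\mathit{UR}}$. I would maintain a secondary invariant stating that these subterms belong only to sorts $\Pat$, $\Exp$, $\Var$, $\Gr$ or $\Som$ and contain no $\Solve$- or $\Solve_X$-headed subterm; this is immediate for $t_0$ and preserved because no right-hand side ever plants a $\Solve$ or $\Solve_X$ inside a parameter slot of another $\Solve_X$. Combining both invariants yields that each $t_i$ admits at most one gq-narrowable position, as required.
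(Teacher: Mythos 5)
Your proposal is correct and follows essentially the same route as the paper: induction on the derivation length with a rule-by-rule case analysis, resting on the observations that terminal right-hand sides are normal forms and that the auxiliary heads $\Solve_{\it JL},\Solve_{\it JR},\dots$ never occur inside proper subterms of left-hand sides. Your explicit spine invariant is a cleaner strengthening of the paper's looser induction hypothesis (``only one reducible subterm at position $u.k$''); the only nitpick is that the side condition $P'\neq\emptypattern$ on the active $\Solve$-term should be dropped, since rules $r_2$, $r_3$, $r_{11}$, $r_{12}$ (and the base case) can legitimately plant $\Solve(\conf{\emptypattern}{\ulm})$ in that slot, which is then the unique redex for $r_0$.
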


  \begin{proof}
    The proof is by induction on $i$.

    Base case ($i = 0$). Since the function $solve$ is completely
    defined by structural induction on patterns, it follows that there
    exists one and only one rule that applies to the term 
    $t_0 = Solve(\config{\uli_{G}: \emptyset \To G}{P}{G})$ at
    position $\Lambda$ according to
    the structure (constructor at the head) of pattern $P$.

    Induction step: Assume that $t_i$ contains only one reducible
    subterm at position $u_i = u.k$ via rule $r$ with $k$ being a
    natural number ($k > 0$) and $u$ a position in $t_i$.  Below, we
    discuss the different cases according to the considered rule $r$.
    \begin{description}
    \item[$r = r_0$:]  Rule $r_0$ considers the degenerated case when
      one looks fo solutions of the empty pattern $\emptypattern$. In
      this case there is no solution and the empty set of matches is
      computed $\emptysetbis_G$. The right-hand side of rule $r_0$ is in normal
      form and thus does not contain a possible reducible
      subterm. In addition, according to the shape of the other rules only
      the next upper position $u$ may become a potential reducible
      term via one of the rules $r_0, r_3, r_4, r_6, r_8, r_{10},r_{12}, r_{13}$.
 
    \item[$r = r_1$:] Rule $r_1$ considers the case when one looks for
      solutions when the pattern is of the form $BASIC(G_1)$ for some
      graph $G_1$. The right-hand side of the rule is in normal form
      and thus does not contain a possible reducible subterm. However,
      according to the shape of the other rules (height of the
      left-hand sides is equal to 1), only the next upper term at position $u$
      may become a potential reducible term via one of the rules
      $r_0, r_3, r_4, r_6, r_8, r_{10}, r_{12}, r_{13}$.

    \item[$r= r_2$:] In this case we have $t_i \rew_{[u.k,
        r_2,\sigma_i]} t_{i+1}$ \\  with $t_{i+1} = t_i[\sigma_i(
      \Solve_{\it JL}\,(\Solve\, (\,\conf{P_1}{\ulm}\,), P_2))\!\!\downarrow_{gql}]_{u.k}
      $. $t_{i+1}$ cannot be reduced at position $u$ because $t_i$ was
      not reducible at position $u$ and the head of the right-hand
      side of rule $r_2$ is the operation $\Solve_{\it JL}$ which
      does not appear in the subterms of the left-hand sides of the
      rules in $\syst$. Therefore,  $t_{i+1}$ can be reducible either at position
      $u.k$ or $u.k.1$ since $t_i$ was reducible at position $u.k$
      only. $t_{i+1}$ is not reducible at position $u.k$ because rule
      $r_3$ cannot be used (no possible pattern-matching). It remains
      position $u.k.1$ at which term $t_{i+1}$ can be reduced if term
      $\sigma_i(P_1)$ is headed by one of the constructors of patterns.
      
    \item[$r = r_3$:] In this case we have
      $t_i \rew_{[u.k, r_3,\sigma_i]} t_{i+1}$ with \\
      $t_{i+1} = t_i[\sigma_i(\Solve_{\it JR}\,(\ulm,
      \Solve\,(\,\conf{P}{\ulm}\,) )\!\!\downarrow_{gql}]_{u.k}$. $t_{i+1}$ cannot be
      reduced at position $u$ because $t_i$ was not reducible at
      position $u$ and the head of the right-hand side of rule $r_3$
      is the operation $\Solve_{\it JR}$ which does not appear in the
      subterms of the left-hand sides of the rules. Therefore,
      $t_{i+1}$ can be reducible either at position $u.k$ or $u.k.2$
      since $t_i$ was reducible at position $u.k$ only. $t_{i+1}$ is
      not reducible at position $u.k$ because rule $r_4$ cannot be
      used (no possible pattern-matching). It remains position $u.k.2$
      at which term $t_{i+1}$ can be reduced if term $\sigma_i(P)$ is
      headed by one of the constructors of patterns.

      \item[$r = r_4$:] In this case we have
      $t_i \rew_{[u.k, r_4,\sigma_i]} t_{i+1}$ with
      $t_{i+1} = t_i[\sigma_i(\conf{\emptypattern}{
        \opn{Join}(\ulm,\ulm')} )\!\!\downarrow_{gql}]_{u.k}$. Notice that the subterm
      $\opn{Join}(\ulm,\ulm')$ is part of a buit-in constraint and is supposed
      to be evaluated (in normal form). Thus the subterm of $t_{i+1}$ at
      position $u.k$ cannot be reducible further. However $t_{i+1}$
      can be reducible at position $u$ by using one of the rules
      $r_0, r_3, r_4, r_6, r_8, r_{10}, r_{12}, r_{13}$.

    \item[$r = r_5$:] In this case we have
      $t_i \rew_{[u.k, r_5,\sigma_i]} t_{i+1}$ with \\
      $t_{i+1} = t_i[\sigma_i(\Solve_{\it
        BI}\,(\Solve\,(\,\conf{P}{\ulm}, \expr, x)) ) \!\!\downarrow_{gql}]_{u.k}
      $. $t_{i+1}$ cannot be reduced at position $u$ because $t_i$ was
      not reducible at position $u$ and the head of the right-hand
      side of rule $r_5$ is the operation $\Solve_{\it BI}$ which does
      not appear in the subterms of the left-hand sides of the
      rules in $\syst$. Therefore, $t_{i+1}$ can be reducible either at position
      $u.k$ or $u.k.1$ since $t_i$ was reducible at position $u.k$
      only. $t_{i+1}$ is not reducible at position $u.k$ because rule
      $r_6$ cannot be used (no possible pattern-matching). It remains
      position $u.k.1$ at which term $t_{i+1}$ can be reduced if term
      $\sigma_i(P)$ is headed by one of the constructors of
      patterns.

      \item[$r=r_6$:] In this case we have
      $t_i \rew_{[u.k, r_4,\sigma_i]} t_{i+1}$ with
      $t_{i+1} = t_i[\sigma_i( \conf{\emptypattern}{\opn{Bind}(\ulm,\expr,x)} )\!\!\downarrow_{gql}]_{u.k}$. Notice that the subterm
      $\opn{Bind}(\ulm,\expr,x)$ is part of a buit-in constraint and is supposed
      to be evaluated (in normal form). Thus, the subterm of $t_{i+1}$ at
      position $u.k$ cannot be reducible further. However $t_{i+1}$
      can be reducible at position $u$ by using one of the rules
      $r_0, r_3, r_4, r_6, r_8, r_{10}, r_{12}, r_{13}$.

    \item[$r = r_7$ :] In this case we have
      $t_i \rew_{[u.k, r_7,\sigma_i]} t_{i+1}$ with \\
      $t_{i+1} = t_i[\sigma_i(\Solve_{\it
        FR}\,(\Solve\,(\,\conf{P}{\ulm}, \expr))\!\!\downarrow_{gql}]_{u.k} $. $t_{i+1}$
      cannot be reduced at position $u$ because $t_i$ was not
      reducible at position $u$ and the head of the right-hand side of
      rule $r_7$ is the operation $\Solve_{\it FI}$ which does not
      appear in the subterms of the left-hand sides of the rules in
      $\syst$. Therefore, $t_{i+1}$ can be reducible either at
      position $u.k$ or $u.k.1$ since $t_i$ was reducible at position
      $u.k$ only. $t_{i+1}$ is not reducible at position $u.k$ because
      rule $r_8$ cannot be used (no possible pattern-matching). It
      remains position $u.k.1$ at which term $t_{i+1}$ can be reduced
      if term $\sigma_i(P)$ is headed by one of the constructors of
      patterns.
      
      \item[$r=r_8$:] In this case we have
      $t_i \rew_{[u.k, r_8,\sigma_i]} t_{i+1}$ with
      $t_{i+1} = t_i[\sigma_i(\conf{\emptypattern}{\opn{Filter}(\ulm,\expr)}  )\!\!\downarrow_{gql}]_{u.k}$. Notice that the subterm
      $\opn{Filter}(\ulm,\expr)$ is part of a buit-in constraint and is supposed
      to be evaluated (in normal form). Thus, the subterm of $t_{i+1}$ at
      position $u.k$ cannot be reducible further. However $t_{i+1}$
      can be reducible at position $u$ by using one of the rules
      $r_0, r_3, r_4, r_6, r_8, r_{10}, r_{12}, r_{13}$.

    \item[$r = r_9$ :] In this case we have
      $t_i \rew_{[u.k, r_9,\sigma_i]} t_{i+1}$ with \\
      $t_{i+1} = t_i[\sigma_i(\Solve_{\it
        BU}(\Solve\,(\,\conf{P}{\ulm}\,), R) )\!\!\downarrow_{gql}]_{u.k} $. $t_{i+1}$
      cannot be reduced at position $u$ because $t_i$ was not
      reducible at position $u$ and the head of the right-hand side of
      rule $r_9$ is the operation $\Solve_{\it BU}$ which does not
      appear in the subterms of the left-hand sides of the rules in
      $\syst$. Therefore, $t_{i+1}$ can be reducible either at
      position $u.k$ or $u.k.1$ since $t_i$ was reducible at position
      $u.k$ only. $t_{i+1}$ is not reducible at position $u.k$ because
      rule $r_{10}$ cannot be used (no possible pattern-matching). It
      remains position $u.k.1$ at which term $t_{i+1}$ can be reduced
      if term $\sigma_i(P)$ is headed by one of the constructors of
      patterns.
      
      \item[$r=r_{10}$:] In this case we have
      $t_i \rew_{[u.k, r_{10},\sigma_i]} t_{i+1}$ with
      $t_{i+1} = t_i[\sigma_i( \conf{\emptypattern}{\opn{Build}(\ulm,R)} )\!\!\downarrow_{gql}]_{u.k}$. Notice that the subterm
      $\opn{Build}(\ulm,R)$ is part of a buit-in constraint and is supposed
      to be evaluated (in normal form). Thus, the subterm of $t_{i+1}$ at
      position $u.k$ cannot be reducible further. However $t_{i+1}$
      can be reducible at position $u$ by using one of the rules
      $r_0, r_3, r_4, r_6, r_8, r_{10}, r_{12}, r_{13}$.

    \item[$r= r_{11}$:] In this case we have
      $t_i \rew_{[u.k, r_{11},\sigma_i]} t_{i+1}$ with \\
      $t_{i+1} = t_i[\sigma_i( \Solve_{\it
        UL}\,(\Solve\,(\,\conf{P_1}{\ulm}\,), P_2) )\!\!\downarrow_{gql}]_{u.k}
      $. $t_{i+1}$ cannot be reduced at position $u$ because $t_i$ was
      not reducible at position $u$ and the head of the right-hand
      side of rule $r_2$ is the operation $\Solve_{\it UL}$ which does
      not appear in the subterms of the left-hand sides of the rules
      in $\syst$. Therefore, $t_{i+1}$ can be reducible either at
      position $u.k$ or $u.k.1$ since $t_i$ was reducible at position
      $u.k$ only. $t_{i+1}$ is not reducible at position $u.k$ because
      rule $r_{12}$ cannot be used (no possible pattern-matching). It
      remains position $u.k.1$ at which term $t_{i+1}$ can be reduced
      if term $\sigma_i(P_1)$ is headed by one of the constructors of
      patterns.
      
    \item[$r = r_{12}$:] In this case we have
      $t_i \rew_{[u.k, r_{12},\sigma_i]} t_{i+1}$ with \\
      $t_{i+1} = t_i[\sigma_i(\Solve_{\it UR}\,(\ulm,
      \Solve\,(\,\conf{P}{\ulm})\,) )\!\!\downarrow_{gql}]_{u.k}$. $t_{i+1}$ cannot be
      reduced at position $u$ because $t_i$ was not reducible at
      position $u$ and the head of the right-hand side of rule $r_{12}$
      is the operation $\Solve_{\it UR}$ which does not appear in the
      subterms of the left-hand sides of the rules. Therefore,
      $t_{i+1}$ can be reducible either at position $u.k$ or $u.k.2$
      since $t_i$ was reducible at position $u.k$ only. $t_{i+1}$ is
      not reducible at position $u.k$ because rule $r_{13}$ cannot be
      used (no possible pattern-matching). It remains position $u.k.2$
      at which term $t_{i+1}$ can be reduced if term $\sigma_i(P)$ is
      headed by one of the constructors of patterns.
      
      \item[$r=r_{13}$:] In this case we have
      $t_i \rew_{[u.k, r_{13},\sigma_i]} t_{i+1}$ with
      $t_{i+1} = t_i[\sigma_i( \conf{\emptypattern}{ \opn{Union}(\ulm,\ulm') } )\!\!\downarrow_{gql}]_{u.k}$. Notice that the subterm
      $ \opn{Union}(\ulm,\ulm') $ is part of a buit-in constraint and is supposed
      to be evaluated (in normal form). Thus, the subterm of $t_{i+1}$ at
      position $u.k$ cannot be reducible further. However $t_{i+1}$
      can be reducible at position $u$ by using one of the rules
      $r_0, r_3, r_4, r_6, r_8, r_{10}, r_{12}, r_{13}$.
      
    \end{description}
\end{proof}

\begin{proposition}[termination]
  \label{prop:terminating}
  The relation $\rew$ is terminating.
\end{proposition}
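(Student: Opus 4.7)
The plan is to exhibit a monotone polynomial interpretation $W$ of the terms in $\patalgebra(\V)$ into $\mathbb{N}$ that strictly decreases at every $\rew$ step; well-foundedness of $(\mathbb{N},<)$ then precludes infinite derivations. Since $\rew$ rewrites at a single position and $W$ will be strictly monotone in every subterm slot, a local decrease $W(\sigma(rhs)\!\!\downarrow_{gq}) < W(\sigma(lhs))$ automatically propagates to the full term by context closure.

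First I would define a pattern weight $w$ by structural induction, setting $w(\emptypattern)=1$, $w(\clause{BASIC}(L))=2$, $w(P_1\clause{ JOIN }P_2)=w(P_1\clause{ UNION }P_2)=w(P_1)+w(P_2)+2$, and $w(P\clause{ BIND }e\clause{ AS }x)=w(P\clause{ FILTER }e)=w(P\clause{ BUILD }R)=w(P)+1$. Then I would extend $w$ to a weight $W$ on all terms of $\patalgebra(\V)$ by $W(\conf{P}{\ulm})=w(P)$, $W(\Solve(c))=3\,W(c)+1$, $W(\Solve_{JL}(t,P_2))=W(\Solve_{UL}(t,P_2))=W(t)+3\,w(P_2)+2$, and $W(\opn{op}(\ldots,t,\ldots))=W(t)+1$ for each of the remaining auxiliary operators $\Solve_{JR}, \Solve_{UR}, \Solve_{BI}, \Solve_{FR}, \Solve_{BU}$, where $t$ is the (unique) configuration-valued argument.

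Verifying the fourteen rules is then routine. For instance, rule $r_2$ gives $3w(P_1)+3w(P_2)+7$ on the left and $3w(P_1)+3w(P_2)+3$ on the right; rule $r_3$ gives $3w(P)+3$ on the left and $3w(P)+2$ on the right; rule $r_4$ drops from $2$ to $1$; the rules $r_{11}, r_{12}, r_{13}$ mirror $r_2, r_3, r_4$ by symmetry of $\clause{UNION}$ with $\clause{JOIN}$; and each of the three pairs $(r_5,r_6)$, $(r_7,r_8)$, $(r_9,r_{10})$ for $\clause{BIND}$, $\clause{FILTER}$, $\clause{BUILD}$ drops from $3w(P)+4$ to $3w(P)+2$ at the first step and from $2$ to $1$ at the second. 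Because $W$ inspects only the pattern-algebra structure (pattern combinators, auxiliary operator symbols, and configuration wrappers), whereas substitutions instantiate only variables of sorts $\Gr$, $\Som$, $\Exp$ and $\Var$ that live inside configurations or as auxiliary parameters, and because the $\downarrow_{gq}$ evaluation merely reduces $\gqa$-operations to concrete sets of matches without altering the surrounding weighted structure, one has $W(\sigma(t))=W(t)$ for both sides of every rule; so comparing $W(lhs)$ with $W(rhs)$ as abstract polynomials in $w(P),w(P_1),w(P_2)$ over $\mathbb{N}$ suffices.

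The only delicate point is tuning the coefficients so that both the splitting rules $r_2, r_{11}$ and the continuation rules $r_3, r_{12}$ (which re-spawn a $\Solve$ on the right subpattern, preserving its $w(P_2)$ contribution) decrease simultaneously: this is what forces the factor $3$ in $W(\Solve(\cdot))$ together with the additive $3w(P_2)+2$ inside $W(\Solve_{JL}(\cdot,P_2))$ and $W(\Solve_{UL}(\cdot,P_2))$. Once this is fixed, strict monotonicity of $W$ in each recursive slot (each defining equation has the shape $aW(t)+b$ with $a\geq 1$) immediately yields $W(t)>W(t[\sigma(rhs)\!\!\downarrow_{gq}]_u)$ for every step $t\rew t'$, and termination follows from well-foundedness of $\mathbb{N}$.
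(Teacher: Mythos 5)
Your proof is correct, but it takes a genuinely different route from the paper. The paper terminates $\rew$ with a lexicographic combination $(M(t),H(t))$, where $M(t)$ is the multiset of heights of the patterns occurring in $t$ and $H(t)$ is the height of $t$: the ``unfolding'' rules $r_1,r_2,r_3,r_5,r_7,r_9,r_{11},r_{12}$ strictly decrease the multiset $M$, and the remaining rules $r_0,r_4,r_6,r_8,r_{10},r_{13}$ leave $M$ unchanged while decreasing the height. You instead build a single strictly monotone polynomial interpretation $W$ into $(\mathbb{N},<)$, and your coefficients do check out on all fourteen rules (I verified $r_0$ through $r_{13}$; e.g.\ $r_2$ drops $3w(P_1)+3w(P_2)+7$ to $3w(P_1)+3w(P_2)+3$ and $r_3$ drops $3w(P)+3$ to $3w(P)+2$ exactly as you state, and the choice $c=3$, $d=2$ in $W(\Solve_{JL}(t,P_2))=W(t)+3w(P_2)+2$ is indeed the point where both the splitting and the continuation rules must decrease simultaneously). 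Your approach buys a fully explicit, mechanically checkable certificate that collapses the two-level lexicographic argument into one well-founded measure, and it handles context closure cleanly via strict monotonicity of each defining equation $aW(t)+b$; the paper's multiset-plus-height ordering is more in the spirit of standard path-ordering arguments and generalizes more readily if new unfolding rules are added without retuning coefficients. One phrasing in your write-up is slightly loose: $W(\sigma(t))=W(t)$ should be read as saying that the polynomial inequality $W(lhs)>W(rhs)$ in the indeterminates $w(P),w(P_1),w(P_2)$ specializes under any substitution, and that $\downarrow_{gq}$ only rewrites $\Som$-valued arguments that $W$ ignores; with that reading the argument is complete.
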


\begin{proof}
The proof is quite direct. One possible ordering is the
lexicographical ordering $>_{lex}$ defined over terms of the rewriting
system $\syst$ as follows :
$ t_1 > t_2  iff (M(t_1), H(t_1)) >_{lex} (M(t_2), H(t_2))$ where

$H(t)$ is the height of term $t$ and 
$M(t)$ is the multiset including all values $H(P)$ for every patter $P$
occurring in $t$. $>_{lex}$ is we

For rules $r_1, r_2, r_3, r_5, r_7, r_9, r_{11}, r_{12}$, we
have $(M(lhs), H(lhs)) >_{lex} (M(rhs), H(rhs))$ because $M(lhs) >
M(rhs)$. For rules $r_0, r_4, r_6, r_8, r_{10}, r_{13} $  we
have $(M(lhs), H(lhs)) >_{lex} (M(rhs), H(rhs))$ because
$M(lhs) =  M(rhs)$ and $H(lhs) > H(rhs)$.
\end{proof}

In Fig.~\ref{fig:rew-queries}, we enrich the rewriting system $\syst$ by
means of six additional rules which tackle queries as in
Section~\ref{ssec:gral-query}.  Remember that the \emph{result} of a
query $Q$ over a graph $G$ is a graph $\Result_C(Q,G)$ when $Q$ is a
CONSTRUCT query whereas it is a table
$\Result_S(Q,G)$ when
$Q$ is a SELECT query, or both in the case of a $\conselect\ $ query
$\Result_{CS}(Q,G)$. The details of the display functions
$\Print_C$,  $\Print_S$ and $\Print_{CS}$ used in rules $r_{15},
r_{17}$ and $r_{19}$ are omitted as they are out of the scope of the present paper.

For SELECT and \conselect\ queries we use the graph $\Graph(S)$
associated to the set of variables~$S$ (see,
Definition~\ref{def:gral-sem-query-select} or \cite{DEP2021} for
details).

\begin{figure}[!th]
  \caption{$\syst$ (contined): Rewriting rules for queries}
  \label{fig:rew-queries}

\resizebox{\textwidth}{!}{$\displaystyle
   \begin{array}{|llrcl|}
\hline
  &&&& \\
\;r_{14} & : &
\Solve_{Q}(\clause{ CONSTRUCT } R \clause{ WHERE } P, G) & 
  \to & 
  {\it Display}_{C} (R, \Solve(\,\conf{P \clause{ BUILD } R}{\uli_G}\,)\,) \\
\;r_{15} & : &
   {\it Display}_{C} (R, \conf{\emptypattern}{\ulm}) & 
  \to & 
        \Print_C(R, \ulm) \\ 
  &&&& \\
  \;r_{16} & : &
  \Solve_{Q}\,(\clause{ SELECT } S \clause{ WHERE } P, G) &
  \to & 
      {\it Display}_{S} (S, \Solve(\,\conf{P \clause{ BUILD }
        \Graph(S)}{\uli_G}\,)\,) \\
  \;r_{17} & : &
  {\it Display}_{S} (S, \conf{\emptypattern}{\ulm}) &
  \to & 
  \Print_S(S, \ulm) \\
  &&&& \\
  \;r_{18} & : &
  \Solve_{Q}\,(\clause{ \conselect\  } S , R \clause{ WHERE } P, G) &
  \to & 
      {\it Display}_{CS} (S, R, \Solve(\,\conf{P \clause{ BUILD }
        (\Graph(S) \cup R)}{\uli_G}\,)\,) \\
  \;r_{19} & : &
  {\it Display}_{CS} (S, R, \conf{\emptypattern}{\ulm}) &
  \to & 
  \Print_{CS}(S, R, \ulm) \\ 
  &&&& \\
\hline
 \end{array}$
 }
\end{figure}

The Soundness and completeness of the calculus with respect to queries
are direct consequences of Theorems~\ref{th:soundness} and
\ref{th:completeness} and
Definitions~\ref{def:gral-sem-query-construct}, 
\ref{def:gral-sem-query-select} and \ref{def:gral-sem-query-conselect}.

\section{Conclusion and Related Work}
\label{sec:conclusion}

We propose a rule-based calculus for a core graph query language.
The calculus is generic and could easily be adapted to different graph
structures and extended to actual graph query languages. For instance,
graph path variables can be added to the syntax and matches could be
constrained by positive, negative or path constraints just by
performing matches with constraints $Match(L,G,\Phi)$ where $\Phi$
represents constraints in a given logic over items of graphs $L$ and
$G$. To our knowledge, the proposed calculus is the first rule-based
sound and complete calculus dedicated to graph query languages. This
work opens new perspectives regarding the application of verification
techniques to graph-oriented database query languages.  Future work
includes also an implementation of the proposed calculus.

Among related work, we quote first the use of declarative (functional and
logic) languages in the context of relational databases (see,
e.g. \cite{BrasselHanusMueller08PADL,Hanus04JFLP,Almendros-JimenezB03}). 
In these works, the considered databases follow the
relational paradigm which differs from the graph-oriented one that we are
tackling in this paper. Our aim is not to make connections between
graph query languages and functional logic ones. We are rather
interested in investigating formally graph query languages, and
particularly in using dedicated rewriting techniques for such
languages.

The notion of \emph{pattern} present in this paper is close to
the syntactic notions of \emph{clauses} in \cite{cypher} or \emph{graph patterns} in \cite{AnglesABHRV17}. For such syntactic notions, some
authors associate as semantics sets of variables bindings (tables) as
in \cite{cypher,PerezAG09} or simply graphs as in
\cite{AnglesABBFGLPPS18}. In our case, we associate both variable
bindings and graphs since we associate sets of graph homomorphisms to
patterns. This semantics is borrowed from a first work on formal
semantics of graph queries based on category theory  \cite{DEP2021}. Our
semantics allows composition of patterns in a natural way.  Such
composition of patterns is not easy to catch if the semantics is based
only on variable bindings but can be recovered when queries
have graph outcomes as in G-CORE~\cite{AnglesABBFGLPPS18}.



\end{document}